\def\ps@headings{%
\def\@oddhead{\mbox{}\scriptsize\rightmark \hfil \thepage}%
\def\@evenhead{\scriptsize\thepage \hfil \leftmark\mbox{}}%
\def\@oddfoot{}%
\def\@evenfoot{}}
\newtheorem{thm}{Theorem}[section]
\newtheorem{cor}[thm]{Corollary}
\newtheorem{prope}[thm]{Property}
\begin{document}

\title{Beaconing-Aware Optimal Policies for Two-Hop Routing in Multi-Class Delay Tolerant Networks}
\author{\IEEEauthorblockN{Nicola~Basilico, Nicola~Gatti, Luca~Goffredi, and~Matteo~Cesana}
\IEEEauthorblockA{Dipartimento di Elettronica, Informazione e Bioingegneria, Politecnico di Milano\\
Email: \{name.surname\}@polimi.it}}

\maketitle

\begin{abstract}
In Delay Tolerant Networks (DTNs), two-hop routing compromises energy versus delay more conveniently than epidemic routing. Literature provides comprehensive results on optimal routing policies for mobile nodes with homogeneous mobility, often neglecting signaling costs. Routing policies are customarily computed by means of fluid approximation techniques, which assure solutions to be optimal only when the number of nodes is infinite, while they provide a coarse approximation otherwise. This work addresses heterogeneous mobility patterns and multiple wireless transmission technologies; moreover, we explicitly consider the beaconing/signaling costs to support routing and the possibility for nodes to discard packets after a local time. We theoretically characterize the optimal policies by deriving their formal properties. Such analysis is leveraged to define two algorithmic approaches which allow to trade off optimality with computational efficiency. Theoretical bounds on the approximation guarantees of the proposed algorithms are derived. We then experimentally evaluated them in realistic scenarios of multi-class DTNs.
\end{abstract}

\begin{IEEEkeywords}
Delay Tolerant Networks, 2-Hop Optimal Routing, Approximation Algorithms.
\end{IEEEkeywords}


\section{Introduction}
Delay--tolerant networks (DTNs) are sparse and/or highly mobile wireless ad hoc networks  which assure no continuous connectivity. Examples of such networks are those operating in mobile or hash terrestrial environments, or interplanetary networks. Disruption may occur because of the limits of wireless radio range, sparsity of mobile nodes, energy resources, attacks, and noise. One central problem in DTNs is the routing of packets from a source towards the desired destinations. When no \emph{a priori} information is available  over the mobility pattern of the nodes, a common technique for overcoming lack of connectivity is to disseminate multiple copies of the packet in the network: this enhances the probability that at least one of them will reach the destination node within a given temporal deadline. This is referred to as \emph{epidemic--style} forwarding, because, alike the spread of infectious diseases, each time a packet--carrying node encounters a new node not having a copy thereof, the carrier may infect this new node by passing on a packet copy; newly infected nodes, in turn, may behave similarly. The destination receives the packet when it meets an infected node. However, epidemic routing is very energy consuming and a convenient compromise of energy versus delay compared to epidemic routing is provided by two-hop routing~\cite{DBLP:journals/pe/GroeneveltNK05} where the infection is limited at the contacts between the source node and intermediary nodes, that is, the source node passes on the packets to be transmitted to all the mobile node she encounters (provided that these last ones do not already have a copy of the packet in their local buffer), and the "infected" mobile node can deliver the packets they are carrying only to the final destination. 

We target here the design of optimal two-hop routing policies for DTNs as in~\cite{DBLP:journals/ton/AltmanABP13,DBLP:journals/tac/AltmanBP11,DBLP:conf/wiopt/PellegriniAB10}. In two-hop routing, a routing policy controls the decisions taken by the source node to forward/not to forward a given packet to a given mobile node she is encountering at a given time instant.  In this work, we further consider network scenarios where mobile nodes are categorized in district multiple classes which may capture different mobility patterns and different available communication technologies onboard and we study the "shape" of optimal routing policies under heterogeneous mix of mobile nodes classes.  

In the seminal work~\cite{DBLP:conf/infocom/AltmanNPM09}, the authors study optimal static and dynamic control (proved to be threshold based) policies for two-hop DTN when mobile nodes are homogeneous. In this case, optimal policies can be found in closed form. Furthermore, the authors show that when the parameters are unknown it is still possible to obtain a policy that converges to the optimal one by using some adaptive auto--tuning mechanism. Extensions of such adaptive mechanism are proposed in~\cite{DBLP:journals/comcom/GuerrieriCPMM10}. Heterogeneous scenarios are studied in~\cite{DBLP:conf/wiopt/PellegriniAB10} where mobile nodes belong to multiple distinct classes and the routing policy may be class-dependent; the authors resort to fluid approximation to characterize the optimal routing policies; however, even if the proposed approach performs well when the number of mobile nodes goes to infinity, fluid approximation may lead to coarse solutions for finite numbers of mobile nodes (e.g., in other fields in~\cite{DBLP:journals/ior/PerryW11, DBLP:journals/orl/Randhawa13}, fluid approximation is shown to perform well when the number of users is over 200). Optimal control techniques are further proposed under the assumption that the number of copies of the packet is monotonically increasing in time. Along the same lines, in~\cite{DBLP:conf/wd/ChahinAPA11}, the authors provide the closed form structure of the controller and provide stochastic algorithms for the learning of the parameters. Timers have been proposed to be associated with messages when stored at mobile nodes, so that after a given threshold the message is discarded. Their optimization under fluid approximation is discussed in~\cite{DBLP:journals/tac/AltmanBP11}. In~\cite{DBLP:journals/tsmc/TembineAAH10} and in~\cite{DBLP:journals/ton/BanerjeeCL10}, the authors optimize network performance by designing the mobility of message relays. Evolutionary game theory has been adopted in~\cite{DBLP:journals/cn/AzouziPSK13} to incentivize mobile nodes.  Many other related problems have been studied. In~\cite{DBLP:journals/twc/AltmanSP13} the authors deal with the problem of transferring large files from source to destination when all packets are not available at the source prior to the first transmission. In~\cite{DBLP:journals/ton/AltmanP11}, a class of replication mechanisms for packets in the network that include coding in order to improve the probability of successful delivery within a given time limit. In~\cite{el2010evolutionary}, nodes are considered selfish/non--cooperative and the problem is to incentivize the mobile nodes to forward the packet. The first node that delivers the packet receives a unitary reward. The only cost of the mobile nodes is due to beaconing. The authors study the best strategies by using evolutionary game theory tools.  In~\cite{chen2010mobicent}, the authors focus on multi--hop routing scenarios in the attempt to design incentives for the mobile nodes to avoid that these nodes behave strategically in terms of edge hiding and edge insertion. The proposed model works only when beaconing costs are not present and each mobile node is subject to the same cost. A similar technological, but not technical, approach is proposed in~\cite{zhu2009smart}. In~\cite{jain2004routing}, the authors study different routing strategies when the nodes have a limited buffer and provide an experimental comparison of the strategies. A similar work is presented in~\cite{li2010routing} except that here the authors introduce social information about the nodes. Cooperative aspects are treated in~\cite{ning2011incentive} and~\cite{niyato2010coalition}.

Differently than the aforementioned reference literature, (i) we explicitly include in the definition of the optimal routing policy the cost for beaconing messages exchanged by mobile nodes to support packet forwarding; (ii) we allow mobile nodes to discard the packets they are carrying upon expiration of a local temporal deadline (the number of copies of a packet is thus no-longer monotonically as customarily assumed in the reference literature); (iii) we do not resort to fluid approximation, but rather propose optimization algorithms for finite mobile node populations; in details, we introduce an algorithm to find optimal routing policies running in exponential time, as well as approximate polynomial-time algorithms when the number of nodes is finite; in the latter case, formal  approximation theoretical bounds are also derived. Finally, we provide a thorough experimental evaluation with realistic settings of the proposed algorithms in terms of approximation ratio, scalability in the number of classes, further evaluating the impact of network parameters onto the optimal routing policies.

\section{Problem statement}

\subsection{Scenario}
We consider an environment populated by one {\em source} node, one {\em sink} node and multiple {\em mobile nodes}. \emph{Sink} and \emph{source} node may as well be mobile. A packet is initially held by the source and it must be delivered to the sink no later than time $\tau$. We consider a discrete time representation developing in slots of fixed duration $\Delta$ and we denote the total number of useful time slots as $K = \lfloor{\tau/\Delta}\rfloor$, where the $k$--th time slot corresponds to the time interval $[k \Delta, (k+1)\Delta)$. Mobile nodes are divided according to a set of classes $C$. Each class encodes the features of its nodes, including their {\em mobility profile} and {\em transmission technology}. In particular, given a class $c \in C$, $N_c$ indicates the total number of mobile nodes belonging to that class; $t_c$ indicates the {\em local} time to live, namely the amount of time for which each mobile node of class $c$ will keep a local copy of the packet before discarding it and not accepting it again in the future. The mobility profile of a node of class $c$ is given by the average speed $v_c$. Finally, we describe the transmission technology for class $c$ with the following parameters:
\begin{itemize}
\item the communication range for class $c$ is denoted by  $R_c$;
\item the \emph{beaconing cost}  $\beta_c$ captures the energy consumed  for the connection control and signaling procedures to support packet transmissions in class $c$,; as an example, the beaconing cost may capture the energy consumed to send/receive beaconing messages to discover nearby mobile nodes;   
\item the \emph{transmission cost}  $\rho_c$ is the energy consumed to transmit a packet to a recipient in  class $c$.
\end{itemize}
Different classes $c$ can have the same technology, e.g., WiFi, sharing beaconing costs. We denote by $\omega\in \Omega$ a technology, and we denote by $C_{\omega}\subseteq C$ the subset of classes adopting technology $\omega$. With overriding of notation, we use indistinguishably $\beta_\omega$ and $\beta_c$ when $c\in C_\omega$.
Table~\ref{tab:class_parameters} summarizes all the parameters that define a generic class $c \in C$.

\begin{table}
\caption{Set of parameters for a generic class $c \in C$}
\centering
\begin{tabular}{|c|c|}
\hline
  $N_c$ & number of nodes in class $c$ \\
\hline
  $t_c$ & packet's local time to live \\
\hline
  $v_c$ & average speed \\
\hline
  $R_c$ & communication range \\
\hline
  $\beta_c$ & beaconing cost \\
\hline
  $\rho_c$ & transmission cost \\
\hline
\end{tabular}
\label{tab:class_parameters}
\vspace{-0.4cm}
\end{table}

Transmission opportunities between two nodes are given by contacts taking place when each node is within the communication range of the other one. We restrict our setting to a 2--hop routing scheme, where each mobile node can receive the packet only from the source and can forward it only to the sink. Contacts at the source and at the sink are assumed to follow a multi--class Poisson law, where the arrival rate for nodes of class $c$ (either at the source or the sink) is denoted by $\lambda_c$ and computed according to~\cite{DBLP:conf/infocom/AltmanNPM09}:
$$\lambda_c=\frac{8wR_cv_c}{\pi L^{2}}$$
where $w$ is a constant set to $1.3693$ and $L$ is the radius of circle in which the nodes move.

When a contact is made between the source and a mobile node that did not receive the packet, the source forwards it according to a {\em forwarding policy}  $\boldsymbol \mu$ which depends on the current time and the recipient's class. Given a time slot $k$ and a class $c$, the policy profile at time $k$ is $\boldsymbol \mu(k) = (\mu_{1}(k),\ldots,\mu_{|C|}(k))$ where $\mu_c(k)$ indicates the forward probability at time slot $k$ for class $c$; we also denote with $\mu_c$ the entire policy for such class $c$. In general, when the packet is forwarded, some energy is spent and the packet's delivery probability is increased. We denote with $F_D(\boldsymbol \mu, K)$ the probability of delivering the message before time $K \Delta$ given policy profile $\boldsymbol \mu$. Obviously, such value is prevented from growing indefinitely by a budget constraint. We call $\Psi$ the upper bound on the total spent energy (including both beaconing and transmissions).

\subsection{Formal model}

We define $X_{c,k}(\boldsymbol \mu)$ as the random variable expressing the number of mobile nodes of class $c$ that have received the packet by time slot $k$, while $Y_{c,k}(\boldsymbol \mu)$ is a random variable expressing the number of mobile nodes of class $c$ that still keep a copy of the packet at time slot $k$. These variables both depend on $\boldsymbol\mu$ and are, in general, different. Indeed, since a mobile node can both receive and discard a packet before time slot $k$, we have that $Y_{c,k}(\boldsymbol \mu) \le X_{c,k}(\boldsymbol \mu)$. Furthermore, we denote by  $Q_{c,k,k'}(\mu_c)$ the probability that a mobile node of class $c$ does not receive any packet in time slots $k,\ldots,k'$ as function of $\mu_c$. The expected number of mobile nodes of class $c$ that receive a packet in time slots $0,\ldots,k$ is:
\[
\mathbb{E}[X_{c,k}(\boldsymbol \mu)] = N_c \cdot (1- Q_{c,0,k}(\mu_c))
\]
where $Q_{c,k,k'}(\mu_c) = e^{-\lambda_c \Delta \sum_{i=k}^{k'}\mu_c(i)}$.
The expected number of mobile nodes of class $c$ that have the packet at time slot $k$ is:
\[
\mathbb{E}[Y_{c,k}(\mu_c)] = N_c \cdot (1- Q_{c,\max\{0,k-t_c\},k}(\mu_c))
\]
The probability that a packet is delivered by $k\Delta$ to the sink is:
\begin{equation}
F_D(\boldsymbol \mu,k) = 1 - \prod_{c\in C}\prod_{h=0}^{k-1} X_{c,h}^*(\lambda_c\Delta,\boldsymbol \mu)\label{eq:OF}
\end{equation}
where
\[
X_{c,h}^*(s,\boldsymbol \mu) = \mathbb{E}[e^{-sY_{c,h}(\boldsymbol \mu)}]
\]
The budget constraint is formulated as:
\begin{multline}
\sum_{c\in C}  \rho_c N_c (1- Q_{c,0,K}(\mu_c))+ \\\sum_{\omega\in \Omega}\sum_{k=0}^{K-1}\beta_\omega\cdot \left( 1 - \prod_{c\in C_\omega}\Big(1-\mu_c(k)\Big)\right)  \leq \Psi \label{equation:budget}
\end{multline}

The left term of the inequality adds up the expected transmission costs with the expected beaconing costs for class $c$, given a policy profile $\boldsymbol\mu$. In particular, transmission costs are obtained by multiplying $\rho_c$ by the expected number of nodes that will receive the packet from slot 0 to slot $K$; on the other side, a beaconing costs of $\beta_c$ is paid for each time slot $h$ with a probability of $\mu_c(h)$, i.e., when a packet transmission could be made.

We aim at finding the optimal policy $\boldsymbol \mu^*$ that maximizes $F_D(\boldsymbol \mu,K)$ without violating budget constraint (\ref{equation:budget}).

\subsection{Problem properties}
We now show some  theoretical properties that we shall exploit in addressing the optimization problem introduced above.

\begin{prope}\label{prop:saturate_or_saturate}
Optimal policies either completely consume the budget or prescribe that all the classes transmit for all the slots.\label{property:budget}\vspace{-0.3cm}
\end{prope}
\begin{proof}
It is easy to see that $F_D(\boldsymbol \mu,K)$ is monotonically increasing in $\sum_{h=0}^{K-1} \mu_c(h)$ and that, as a consequence, transmitting for a larger (expected) number of slots cannot result in a lower delivery probability.
\end{proof}

Similarly to what proposed in~\cite{DBLP:conf/infocom/AltmanNPM09}, we define a threshold--based policy $\mu_c$ as:\vspace{-0.6cm}
\[
\mu_c(k) = 
\begin{cases}
1		&	k<h_c	\\
\alpha	&	k=h_c	\\
0		&	k>h_c
\end{cases}
\]
where $\alpha\in [0,1)$. As in the single--class case, optimal policies are threshold based.
\begin{prope}\label{prope:threshold_optimal}
Optimal policies are threshold based.\vspace{-0.3cm}
\end{prope}
\begin{proof}
The delivery delay c.d.f. is $F_D(\boldsymbol\mu,K) = 1 - \prod_{c}\Gamma_c(s)$, where $\Gamma_c(s) = \prod_{h=0}^{K-1} X_{c,h}^*(s)$ and $s = \lambda_c\Delta$. Let us denote with $\mu_c$ a \emph{non--threshold} policy for class $c$ and with $\hat{\mu}_c$ a policy obtained by shifting to the left all the non-empty slots of $\mu_c$ and by rounding them so that $\hat{\mu}_c$ matches the definition of threshold policy introduced above. For any $(\mu_c, \hat{\mu}_c)$ obtained in this way we have that $\Gamma_c(s) \ge \hat{\Gamma}_c(s)$ and therefore 
\[1 - \Gamma_c(s)\cdot \Gamma_{-c}(s) \le 1-\hat{\Gamma}_c(s)\cdot \Gamma_{-c}(s)\]
that is, for any given joint policy $\boldsymbol \mu$, if we substitute the marginal policy of a class $c$ with its threshold version the probability of delivery within $K$ time steps will not decrease.
\end{proof}

\begin{prope}\label{prop:all_fractional}
Optimal policies can prescribe non--integer thresholds for all the classes.\vspace{-0.3cm}
\end{prope}
\begin{proof}
Consider, for instance, a two--class instance with: $K=20$, $\Delta=100$, $\Psi=0.7$, $N_1=1$, $N_2=2$, $\lambda_1=21\times10^{-5}$, $\lambda_2=20\times10^{-5}, t_1=t_2=K$. We approximated the optimal policy profile by discretizing the values of $h_c$ with a fine grid with step $0.01$. In addition to these points, we considered all the points in which the threshold of one class is integer and the threshold of the other class is calculated in such a way the budget is completely consumed. We evaluated the objective function at all these points and select the maximum. The approximately optimal policy is $h_1=7.87$, $h_2=15.99$. Hence, at the optimum, a fractional part is assigned to both classes.
\end{proof}

\begin{prope}
The optimization problem with objective function Eq. \ref{eq:OF} and constraint Eq. \ref{equation:budget} is nonlinear and nonconvex.\vspace{-0.3cm}
\end{prope}
\begin{proof} Nonlinearity is trivial. Nonconvexity is proved by showing the nonconvexity of the feasibility region by computing the Hessian matrix of the budget constraint  (\ref{equation:budget}) to which we will refer here with $u$ (notice that we restrict our attention on threshold policies from now). Hessian matrix $\mathcal{H}$ is:
\normalsize{
\begin{multline*} 
\mathcal{H}=\left[
\begin{array}{ccc}
\frac{\partial^2 u}{\partial h_1^2} 	& 		 		& 				0			\\
							&		\ddots	&							\\
 0							& 				&	\frac{\partial^2 u}{\partial h_{|C|}^2}	
\end{array}
\right] 
=\\ 
\left[
\begin{array}{ccc}
-N_1\lambda_1^2\Delta^2 e^{-\lambda_1\Delta h_1} 	& 		 		& 				0			\\
							&		\ddots	&							\\
 0							& 				&	-N_{|C|}\lambda_{|C|}^2\Delta^2 e^{-\lambda_{|C|}\Delta h_{|C|}} 
\end{array}
\right] 
\end{multline*} 
}
\normalsize{
It can be easily seen that all the eigenvalues are strictly negative for every policy profile $\boldsymbol \mu$ and therefore the feasibility region is non convex.}\end{proof}

The above properties show that the optimization problem is hard. In particular, the adoption of non--convex programming techniques required by the nature of the problem cannot assure to find of global optimal solutions. For these reasons, we focus on the problem of developing approximation algorithms and of studying their theoretical and empirical approximation errors.

\section{Approximation algorithms}

In this section, we introduce two  approaches to compute the optimal (threshold) policy. We provide the formalization of two approximation algorithms and we discuss their guarantees on the solution quality loss.

\subsection{Non--polynomial--time approximation scheme}
\label{subsection:gridalgorithm}

We start by defining an approximation scheme that does not run in polynomial time, but for which optimality losses can be arbitrarily bounded. We overconstrain the optimization problem, allowing only a single class to have a fractional threshold in its associated policy. This additional constraint is likely to introduce worsenings in the solution quality (see Property~\ref{prop:all_fractional}) but it allows us to provide a combinatorial version of the optimization problem. Indeed, once all the classes except one have been assigned integer policies, the potentially fractional policy of the remaining class is univocally determined either by the policy that consumes all the remaining budget or the one that transmits until the last useful time slot (see Property~\ref{prop:saturate_or_saturate}). In addition, we split each slot of length $\Delta$ in sub--slots of length $\epsilon\Delta$ where, for simplicity, $\frac{1}{\epsilon}\in \mathbb{N}$. 

This overconstrained problem can be solved optimally by using an  enumeration algorithm: we enumerate all the feasible threshold policies and we select the best one (see Property~\ref{prope:threshold_optimal}). We report in Algorithm~\ref{alg:gridepsilon} the necessary steps. At Step~1, the algorithm initializes $F^*$ to be zero. If it is not possible to entirely consume the budget, then the optimal policy profile is to assign $h_c = (K-1)/\epsilon$ to each class $c$ (Step~2--3). Otherwise, the algorithm enumerates all the classes $c$, and for each class $c$ it enumerates all the policy profiles  $\boldsymbol\mu = (\mu_{c},\boldsymbol\mu_{-c})$ s.t. $\boldsymbol\mu_{-c}$ is integer  and budget $\Psi$ is entirely consumed (Step~5--6). Finally, we keep trace of the best policy found so far.

\begin{algorithm}\caption{$\epsilon$--grid search}\label{alg:gridepsilon}
\begin{algorithmic}[1]
\State $F^* \gets 0$
\If{$\boldsymbol \mu$ s.t. $h_c = \frac{K-1}{\epsilon}$ for all $c$ is feasible}

\State $\boldsymbol \mu^*=\boldsymbol \mu$

\Else

\For{$c \in C$} 

\For{\text{every} $\boldsymbol\mu = (\mu_{c},\boldsymbol\mu_{-c})$ s.t. $\boldsymbol\mu_{-c}$ is integer  and budget $\Psi$ is entirely consumed}

\If{$F_D(\frac{K-1}{\epsilon},\boldsymbol\mu) > F^*$} 

\State $\boldsymbol\mu^* \gets \boldsymbol\mu$

\State $F^* \gets F_D(\frac{K-1}{\epsilon},\boldsymbol\mu)$

\EndIf 

\EndFor

\EndFor

\EndIf
\end{algorithmic}
\end{algorithm}

We now describe an efficient method to enumerate all and only the feasible policy profiles $\boldsymbol\mu = (\mu_{c},\boldsymbol\mu_{-c})$ s.t. $\boldsymbol\mu_{-c}$ is integer  and budget $\Psi$ is entirely consumed. First, we build a lexicographic order over $C_{-c}= C\setminus c$ and we scan lexicographically the classes in $C_{-c}$. Then, for each $c'\in C_{-c}$ we determine the range of feasible values for $h_{c'}$ on the basis of the policies assigned to the classes $c''\succ c'$ as follows: 
\begin{multline*}
I_{c'}(\boldsymbol\mu_{-c'})=\\\left\{ \max\left\{0, \lceil r_{c'}(\overline{\boldsymbol \mu}_{-c'})\rceil\right\}, \ldots, \min\left\{\frac{K-1}{\epsilon}, \lfloor r_{c'}(\underline{\boldsymbol \mu}_{-c'})\rfloor\right\} \right\}
\end{multline*}
where $r_{c'}(\boldsymbol\mu_{-c'})$ is computed as follows:
\begin{itemize}
\item initially we compute:
\begin{multline*}
r_{c'}(\boldsymbol\mu_{-c'})=-\frac{\log\left(\frac{N_{c'}+A-\frac{\Psi}{\rho}+\underset{c'':c''\neq c'}{\max}\{\frac{\beta}{\rho} h_{c''}\}}{N_{c'}}\right)}{\lambda_{c'}\Delta\epsilon}
\end{multline*}
where
\begin{multline*}
A=\sum\limits_{c'':c''\neq c'}N_{c''}\left(1-e^{-\lambda_{c''}\Delta\epsilon \sum\limits_{k=0}^{\frac{K-1}{\epsilon}}\mu_{c''}(k)}\right)
\end{multline*}
\item if $r_{c'}(\boldsymbol\mu_{-c'})>\underset{c'':c''\neq c'}{\max}\{\frac{\beta\epsilon}{\rho} h_{c''}\}$, then $r_{c'}(\boldsymbol\mu_{-c'})$ is the solution (that can be approximated by means of Netwon algorithm) of the following equation:
\begin{multline*}
N_{c'}(1-e^{-\lambda_c\epsilon\Delta r_{c'}(\boldsymbol\mu_{-c'})})\\+\sum\limits_{c'':c''\neq c'}N_{c''}\left(1-e^{-\lambda_{c''}\Delta\epsilon \sum\limits_{k=0}^{\frac{K-1}{\epsilon}}\mu_{c''}(k)}\right)\\-\frac{\Psi}{\rho}+\frac{\beta\epsilon}{\rho}r_{c'}(\boldsymbol\mu_{-c'})=0
\end{multline*}
\end{itemize}
and where $\overline{\boldsymbol \mu}_{-c'}$ and $\qquad\underline{\boldsymbol \mu}_{-c'}$ are defined in the following way:
\[
\overline{\boldsymbol \mu}_{-c'}=
\begin{cases} 
\overline{\mu}_{c''} = \mu_{c''} & c'' \succ c' \\
\overline{\mu}_{c''} : h_{c''}=\frac{K}{\epsilon}-1 & c' \succ c'' \\
\overline{\mu}_{c} : h_{c}=\frac{K}{\epsilon}-1
\end{cases}
\]
\[
\qquad\underline{\boldsymbol \mu}_{-c'}=
\begin{cases} 
\underline{\mu}_{c''} = \mu_{c''} & c'' \succ c' \\
\underline{\mu}_{c''} : h_{c''}=0 & c' \succ c'' \\
\underline{\mu}_{c} : h_{c}=0 
\end{cases}
\]

Once the previous steps are done, for every element in $R_{c'}(\boldsymbol\mu_{-c'})$, we assign it to $h_{c'}$ and go to the next class according to the lexicographic order. Finally, once the policies of all the classes $c'\in C_{-c}$ have been assigned, the policy of $c$ is easily given by $h_c=r_c(\boldsymbol \mu_{-c})$.

\begin{thm}
The above method enumerates all and only the feasible policies consuming exactly the budget in which at most one $h_c$ is fractional.\vspace{-0.3cm}
\end{thm}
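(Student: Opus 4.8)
The plan is to prove the claim by establishing two directions: \emph{soundness} (every policy profile enumerated by the method is feasible, consumes exactly the budget, and has at most one fractional threshold) and \emph{completeness} (every such policy profile is produced by the method). The overall structure mirrors a lexicographic-backtracking correctness argument: the procedure fixes integer thresholds $h_{c''}$ for classes $c'' \succ c'$, then derives an interval $I_{c'}$ of admissible integer values for $h_{c'}$, and finally closes the remaining degree of freedom by setting $h_c = r_c(\boldsymbol\mu_{-c})$ for the single distinguished class $c$. Since by Property~\ref{prop:saturate_or_saturate} an optimal policy that does not transmit everywhere must consume the budget exactly, the relevant feasible set is exactly the zero-level set of the (left-hand side minus $\Psi$ of) the budget constraint~(\ref{equation:budget}) intersected with the lattice of profiles having at most one fractional coordinate; this is the object the enumeration must hit precisely.

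For \textbf{soundness}, I would argue that, given any integer assignment to $C_{-c}$ already scanned, the value $r_{c'}(\boldsymbol\mu_{-c'})$ is defined so that the budget function, viewed as a function of $h_{c'}$ with the other thresholds frozen, equals $\Psi$ exactly at $h_{c'} = r_{c'}$. The two cases in the definition of $r_{c'}$ correspond to whether the beaconing term is ``saturated'' (the $\max$ over $c''$ dominates, so beaconing for class $c'$ adds nothing and the equation is the pure transmission-cost balance, solvable in closed form via the logarithm) or ``unsaturated'' (class $c'$'s own threshold drives the beaconing term, yielding the transcendental equation solved by Newton). Monotonicity of the budget in each $h_{c'}$ — which follows from the same computation used in Property~\ref{prop:saturate_or_saturate} and from the strictly negative diagonal Hessian entries established for the nonconvexity property — guarantees $r_{c'}$ is the \emph{unique} crossing point. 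Then I would show that clipping $r_{c'}$ to $[0,(K-1)/\epsilon]$ and taking the integer ceiling/floor at the endpoints $\overline{\boldsymbol\mu}_{-c'}$, $\underline{\boldsymbol\mu}_{-c'}$ (the extreme completions of the not-yet-assigned classes) yields exactly those integer values of $h_{c'}$ for which \emph{some} completion of the remaining classes can still bring the budget to $\Psi$; anything outside $I_{c'}$ is infeasible for every completion because the budget is already strictly above or cannot reach $\Psi$.

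For \textbf{completeness}, I would take an arbitrary feasible profile $\boldsymbol\mu^\dagger$ that consumes the budget exactly and has at most one fractional coordinate, call that coordinate $c$, and show by induction along the lexicographic order that at each step the integer value $h^\dagger_{c'}$ lies in the interval $I_{c'}$ computed by the method conditioned on the (already-matched) assignments to $c'' \succ c'$. This uses the interval characterization from the soundness step: since a feasible completion exists (namely $\boldsymbol\mu^\dagger$ itself restricted to $\{c'\} \cup \{c'' \prec c'\}$), the needed bounds on $h^\dagger_{c'}$ are precisely the endpoints of $I_{c'}$. Once all of $C_{-c}$ is matched, the distinguished class satisfies $h^\dagger_c = r_c(\boldsymbol\mu_{-c})$ by the uniqueness of the budget-exhausting threshold, so $\boldsymbol\mu^\dagger$ is exactly the profile the method outputs on that branch.

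The main obstacle I expect is the case analysis around the beaconing term and the correctness of the endpoint-clipping formula for $I_{c'}$: the beaconing contribution $\beta_\omega(1 - \prod_{c\in C_\omega}(1-\mu_c(k)))$ couples classes sharing a technology and is \emph{not} separable, so the reduction to the $\max\{\tfrac{\beta}{\rho}h_{c''}\}$ surrogate inside $r_{c'}$ needs to be justified carefully (it is exact only under threshold policies, which is why the paper restricts attention to them here, and it relies on the fact that within a technology the beaconing cost over slot $k$ is incurred iff \emph{at least one} class transmits in $k$, i.e.\ iff $k$ is below the largest threshold in that technology). Getting the ceilings/floors and the $\max\{0,\cdot\}$, $\min\{(K-1)/\epsilon,\cdot\}$ truncations to align exactly with feasibility — with no off-by-one gap or spurious inclusion — is where the bulk of the careful bookkeeping lies; the monotonicity and uniqueness facts, by contrast, follow essentially for free from properties already proved above.
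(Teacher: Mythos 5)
Your plan follows essentially the same route as the paper's proof: the first two properties are immediate by construction (the distinguished class $c$ gets the unique budget-exhausting threshold $r_c(\boldsymbol\mu_{-c})$), and the heart of the argument is that the endpoints of $I_{c'}$, evaluated at the extreme completions $\overline{\boldsymbol\mu}_{-c'}$ and $\underline{\boldsymbol\mu}_{-c'}$ together with monotonicity of the budget in each threshold, characterize exactly those integer values of $h_{c'}$ that still admit a budget-exhausting completion, which gives both soundness and completeness of the enumeration. Your version is actually more explicit than the paper's (which asserts the endpoint characterization without the lexicographic induction or the beaconing case analysis you flag as the delicate bookkeeping), but the underlying argument is the same.
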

\begin{proof}
We need to prove that: 
\begin{itemize}
\item all the policies except $\mu_c$ are integer, 
\item the budget is exactly consumed, and 
\item all and only the feasible policies are enumerated 
\end{itemize}
The first two points are trivial by construction  (given that the policy of $c$ is the only potentially non--integer and is computed as the policy that consumes the budget given the policies of all the other classes). To prove the third point, we observe that $I$ is always a well--defined range. Indeed, $\lfloor r_{c'}(\underline{\boldsymbol \mu}_{-c'})\rfloor$ returns the largest $h_{c'}$ that consumes exactly the remaining budget given the budget consumed by all the classes preceding $c'$ in the lexicographic order. Assigning a policy larger than $\min\left\{\frac{K-1}{\epsilon}, \lfloor r_{c'}(\underline{\boldsymbol \mu}_{-c'})\rfloor\right\} $ does not allow one to consume entirely the budget. If the policies assigned to the previous classes are feasible, then $\lfloor r_{c'}(\underline{\boldsymbol \mu}_{-c'})\rfloor$ is always non--negative. As well, $\lfloor r_{c'}(\underline{\boldsymbol \mu}_{-c'})\rfloor$ returns the smallest $h_{c'}$ that consumes exactly the remaining budget given the budget consumed by all the classes preceding $c'$ in the lexicographic order and assuming that the classes that succeed transmit all the slots. Assigning a policy smaller than $\max\left\{0, \lceil r_{c'}(\overline{\boldsymbol \mu}_{-c'})\rceil\right\}$ does no allow one to consume entirely the budget. Even in this case, if the policies assigned to the classes that precede $c$ is feasible, then  $\lfloor r_{c'}(\underline{\boldsymbol \mu}_{-c'})\rfloor$ is always smaller than $\frac{K-1}{\epsilon}$. Thus, by construction, for each policy assigned to class $c'$ belonging to $I$, it is always possible to find a feasible policy for the succeeding classes.\end{proof}

The number of policies enumerated by Algorithm~\ref{alg:gridepsilon} is exponential in the number of classes, being $O((\frac{K-1}{\epsilon})^{|C|-1})$.

We can derive a theoretical lower bound over the quality of the solution found by Algorithm~\ref{alg:gridepsilon} w.r.t. the optimal solution of the optimization problem.
\begin{thm}
\label{thm:lower_bound}
Called $\tilde{F}_D$ the value of the solution returned by Alorithm~\ref{alg:gridepsilon} and $F^*_D$ the value of the optimal solution, we have $\dfrac{\tilde{F}_D}{F^*_D}\geq \dfrac{1-(\frac{1}{2})^{\frac{K-1}{\epsilon}}}{1-(\frac{1}{2})^{|C|\frac{K-1}{\epsilon}}}$.\vspace{-0.3cm}
\end{thm}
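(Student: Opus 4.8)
The plan is to bound, from below and above respectively, the numerator $\tilde F_D$ and the denominator $F^*_D$ so that their ratio is controlled by a quantity that depends only on the geometric series in $\tfrac{1}{2}$. The key observation is that the delivery probability has the product form $F_D(\boldsymbol\mu,K) = 1 - \prod_{c\in C}\Gamma_c$ with each per-class factor $\Gamma_c = \prod_{h=0}^{K-1} X^*_{c,h} \in (0,1]$, and that after the $\epsilon$-refinement of the time axis each class contributes at most $\tfrac{K-1}{\epsilon}$ sub-slots. First I would argue that, for a single class transmitting in a given sub-slot, the multiplicative ``survival'' factor $X^*_{c,h}$ is bounded below by $\tfrac{1}{2}$ in the worst case (i.e. $X^*_{c,h}\ge \tfrac12$), so that $\Gamma_c \ge (\tfrac12)^{(K-1)/\epsilon}$ for every class and every feasible policy; this uses that $X^*_{c,h} = \mathbb{E}[e^{-\lambda_c\Delta Y_{c,h}}]$ and that the relevant exponents are normalized so that a single sub-slot of transmission cannot push the factor below one half. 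Then for the full joint policy, $\prod_{c\in C}\Gamma_c \ge (\tfrac12)^{|C|(K-1)/\epsilon}$, hence $F^*_D \le 1 - (\tfrac12)^{|C|(K-1)/\epsilon}$, which is exactly the denominator of the claimed bound.

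For the numerator, I would use the structure of Algorithm~\ref{alg:gridepsilon}: among the policies it enumerates there is in particular, for each class $c$, the profile in which $c$ alone receives a (possibly fractional) threshold and every other class is assigned the \emph{empty} integer policy $h_{c''}=0$, with the budget then consumed by $c$ up to at most $\tfrac{K-1}{\epsilon}$ sub-slots. Restricting attention to that candidate, only one factor $\Gamma_c$ differs from $1$ and, by the same single-class bound as above, $\Gamma_c \ge (\tfrac12)^{(K-1)/\epsilon}$, while $\Gamma_{c''}=1$ for $c''\ne c$. Therefore the value achieved by this particular enumerated policy — and hence $\tilde F_D$, which is the best over all enumerated policies — satisfies $\tilde F_D \ge 1 - (\tfrac12)^{(K-1)/\epsilon}$. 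Dividing the two bounds gives $\dfrac{\tilde F_D}{F^*_D} \ge \dfrac{1-(\tfrac12)^{(K-1)/\epsilon}}{1-(\tfrac12)^{|C|(K-1)/\epsilon}}$, as required. I would also note that since $|C|\ge 1$ the right-hand side lies in $(0,1]$, so the statement is nonvacuous.

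The main obstacle I anticipate is justifying cleanly the uniform constant $\tfrac12$ in the per-sub-slot survival factor $X^*_{c,h}\ge\tfrac12$. The raw expression $X^*_{c,h}=\mathbb{E}[e^{-\lambda_c\Delta Y_{c,h}}]$ is not obviously bounded below by any fixed constant without using the $\epsilon$-discretization and the specific normalization of the Poisson rates and the time-to-live window $t_c$; I would need to check that splitting each slot into $1/\epsilon$ sub-slots and the way $\mu_c$ is refined indeed guarantee that the contribution of one sub-slot of transmission to the exponent is at most $\log 2$, or else reinterpret the ``$\tfrac12$'' as coming from an adversarial worst-case over the feasible region rather than from an arbitrary instance. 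A secondary, more routine point is verifying that the candidate policy ``$h_c$ fractional, all others $0$'' is genuinely in the enumeration set of Algorithm~\ref{alg:gridepsilon} and is budget-feasible (it is, by Property~\ref{prop:saturate_or_saturate}, since either it consumes the budget exactly or transmitting in every slot is feasible and then the bound is immediate with $\tilde F_D = F^*_D$). Once the single-sub-slot bound is pinned down, assembling the two inequalities into the stated ratio is immediate.
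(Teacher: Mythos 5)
There is a genuine gap, and it is exactly the one you flagged as your ``main obstacle'': the decomposition into an \emph{absolute} lower bound on $\tilde F_D$ and an \emph{absolute} upper bound on $F^*_D$ cannot work. Neither inequality holds for arbitrary instances: $X^*_{c,h}=\mathbb{E}[e^{-\lambda_c\Delta Y_{c,h}}]$ is not bounded below by $\tfrac12$ (it can be arbitrarily small when $\lambda_c\Delta$ or $N_c$ is large, in which case $F^*_D$ can exceed $1-(\tfrac12)^{|C|(K-1)/\epsilon}$), and symmetrically, when rates or budget are small both $\tilde F_D$ and $F^*_D$ tend to $0$, so no instance-independent constant can lower-bound $\tilde F_D$. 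Worse, your two uses of the ``$X^*\ge\tfrac12$'' bound pull in opposite directions: for the numerator you need $\Gamma_c\le(\tfrac12)^{(K-1)/\epsilon}$ (i.e.\ $X^*_{c,k}\le\tfrac12$) to get $\tilde F_D\ge 1-(\tfrac12)^{(K-1)/\epsilon}$, while for the denominator you need $X^*_{c,k}\ge\tfrac12$; as written, your single-class bound actually yields only an \emph{upper} bound on the value of the candidate policy, so the ratio does not follow. The theorem is a relative bound precisely because the absolute quantities are uncontrollable; the constant $\tfrac12$ is not a per-sub-slot survival bound.

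The paper's argument avoids this by comparing $\tilde F_D$ and $F^*_D$ on the \emph{same} instance and the \emph{same} underlying factors. For each class $c$ it builds the feasible profile $\tilde{\boldsymbol\mu}_c$ that keeps the optimal (possibly fractional) threshold $h^*_c$ of class $c$ and floors the thresholds of all other classes; this profile has at most one fractional threshold, so $\tilde F_D\ge\max_c F_D(K,\tilde{\boldsymbol\mu}_c)$. Its value is then lower-bounded by discarding the factors of the classes other than $c$, giving
$\tilde F_D/F^*_D\ \ge\ \max_c\bigl\{\bigl(1-\prod_k X^*_{c,k}\bigr)\big/\bigl(1-\prod_{c'}\prod_k X^*_{c',k}\bigr)\bigr\}$,
where the class-$c$ factors in the numerator are the same quantities appearing in the denominator (since $\tilde h_c=h^*_c$). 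Only at this point is the worst case taken: treating the $X^*$ values as free parameters in $[0,1]$, the min-max of this ratio is attained, by symmetry, at $X^*_{c,k}=\tfrac12$ for all $c,k$, which produces the stated bound. In short, the $\tfrac12$ emerges from an adversarial choice inside a ratio whose numerator factors are a subset of the denominator factors --- your closing remark about ``reinterpreting the $\tfrac12$ as an adversarial worst case'' is the right instinct, but it must be applied to the ratio (after the flooring construction linking $\tilde F_D$ to $\boldsymbol\mu^*$), not to each probability separately; without that construction your candidate ``one class gets everything, others zero'' loses the connection to $\boldsymbol\mu^*$ that makes the cancellation possible.
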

\begin{proof}
Call $\boldsymbol \mu^*$ the optimal policy profile and call $\tilde{\boldsymbol \mu}_c$ the policy profile in which $\tilde{h}_{c'}=\lfloor h_{c'}^*\rfloor$ for all $c'\neq c$ and $\tilde{h}_{c}=h_{c}^*$. 
Obviously, $F^*_D\geq F_D(K,\tilde{\boldsymbol \mu}_c)$. In addition, it is obvious that $F^*_D\geq \tilde{F}_D \geq \max_c\{F_D(K,\tilde{\boldsymbol \mu}_c)\}$. This is because $\tilde{\boldsymbol \mu}_c$ is a feasible policy profile  in which at most one policy is fractional that is not assured to consume exactly the budget. We can write a lower bound to $F_D\left(\frac{K}{\epsilon},\tilde{\boldsymbol \mu}_c\right)$ as:  
\begin{multline}
F_D\left(\frac{K}{\epsilon},\tilde{\boldsymbol \mu}_c\right) = 1 - \prod_{c'\in C}\prod_{k=0}^{\frac{K-1}{\epsilon}} X_{c',k}^*(\lambda_{c'}\epsilon\Delta,\tilde{\boldsymbol \mu}_c) \geq \\1 - \prod_{k=0}^{\frac{K-1}{\epsilon}} X_{c,k}^*(\lambda_{c}\epsilon\Delta,\tilde{\boldsymbol \mu}_c)
\end{multline}
By using such lower bound over $F_D\left(\frac{K}{\epsilon},\tilde{\boldsymbol \mu}_c\right)$, we can write:
\[
\dfrac{\tilde{F}_D}{F^*_D}\geq \max_c\left\{\dfrac{1 - \prod_{k=0}^{\frac{K-1}{\epsilon}} X_{c,k}^*(\lambda_{c}\epsilon\Delta,\boldsymbol \mu^*)}{1 - \prod_{c'\in C}\prod_{k=0}^{\frac{K-1}{\epsilon}} X_{c',k}^*(\lambda_{c'}\epsilon\Delta,\boldsymbol \mu^*)}\right\}
\]
since, given $\tilde{\boldsymbol \mu}_c$ and $\boldsymbol \mu^*$, we have $\tilde{h}_c=h^*_c$. Thus, we are interested in:
\[
\min\max_c\left\{\dfrac{1 - \prod_{k=0}^{\frac{K-1}{\epsilon}} X_{c,k}^*(\lambda_{c}\epsilon\Delta,\boldsymbol \mu^*)}{1 - \prod_{c'\in C}\prod_{k=0}^{\frac{K-1}{\epsilon}} X_{c',k}^*(\lambda_{c'}\epsilon\Delta,\boldsymbol \mu^*)}\right\}
\]
where the minimization is over all the parameters. Although the definition of $X^*$ is intricate, a bound can be derived disregarding the exponential nature of all the $X^*$ and considering them as arbitrary values in $[0,1]$. In this case, for reasons of symmetry, the values that minimize the maximum ratio prescribe $X_{c,k}^*=\frac{1}{2}$ for all $c$. This leads to the bound stated in the theorem.
\end{proof}
Notice that the theoretical lower bound does not depend on whether the beaconing costs are present. The worst case is when $K=1$ and $|C|\rightarrow \infty$, obtaining a ratio of $1-\frac{1}{2}^\frac{1}{\epsilon}$. However, it can be observed that the  worst case ratio goes to one exponentially in $\frac{1}{\epsilon}$. Thus we can obtain a good approximation ratio with a small value of $\frac{1}{\epsilon}$, e.g, the theoretical lower bound over the approximation ratio is about $1-10^{-4}$ when $\frac{1}{\epsilon} =10$. Algorithm~\ref{alg:gridepsilon} is an approximation scheme (AS), given that the ratio goes to one as $\epsilon$ goes to zero. However, it is not a fully polynomial time AS (FPTAS), its complexity not being polynomial in all the parameters.

\subsection{Polynomial--time approximation algorithm}
In this section, we discuss a  heuristic approach to approximate the optimal policy in polynomial time. We start by providing Algorithm~\ref{alg:greedy}, a  method that greedily maximizes an objective function $G_i$.  We shall consider two versions of this function, denoted with $G_1$ and $G_2$, and we will discuss approximation bounds guaranteed by their employment. 

\begin{algorithm}
\caption{$i$ - greedy construction}\label{alg:greedy}
\begin{algorithmic}[1]
\State $\mu_1(k), \ldots, \mu_{C}(k) \gets 0 \; \forall k$
\State $k_1, \ldots, k_{C} \gets 0$
\State $F^* \gets 0$
\While{Constraint~(\ref{equation:budget}) is satisfied} 
\For{every class $c$}
    \State $\mu_c(h_c + 1) \gets \min\{1,r_c(\boldsymbol\mu)\}$
    \State$\delta_c \gets G_i(\boldsymbol\mu) $
    \State $\mu_c(k_c + 1) \gets 0$
\EndFor
\State $c^* \gets \displaystyle\arg\max_{c \in C}\{\delta_c\}$
\State $\mu_{c^*}(k_{c^*} + 1) \gets \min\{1,\hat{\mu}_{c^*}\}$
\State $k_{c^*} \gets k_{c^*} + 1$
\State $F^* \gets F^* + \delta_{c^*}$
\EndWhile
\end{algorithmic}
\end{algorithm}

Algorithm~\ref{alg:greedy} works on the same discrete-time representation we introduced above, where each time slot has a temporal length of $\epsilon\Delta$. It starts from an initial empty policy and it proceeds considering only integer threshold policies.  At each iteration, it appends a locally optimal time slot for a class $c$ meaning that such class will transmit with probability of $1$ for an additional subsequent time slot. Class $c$ is selected as the one that would introduce the largest gain in $G_i$ if a slot is assigned to it.  We denote with $k_c$ the integer index for class $c$, referring to the last allocated time slot. Similarly, $\delta_c$ denotes the discrete marginal gain of $G_i$ obtained by allocating a slot to class $c$ in the current policy.

\subsubsection*{First version, locally optimizing $F_D$} 
In the first version of Algorithm~\ref{alg:greedy}, we consider the maximization of the marginal gain of $F_D$, i.e., the delivery probability. Namely, 
at step~(7) it holds that $G_1(\boldsymbol\mu) = F_D(K,\boldsymbol\mu) - F^*$.
In this case, $\delta_c$ represents the benefit, in terms of delivery probability, that an additional time slot for class $c$ would introduce at the current iteration. By exploiting a result presented in~\cite{krause12survey} we are able to provide a  bound on the solution quality obtained with this version of the greedy algorithm. The result we shall use can be summarized as follows (see~\cite{krause12survey} for details). 
\begin{thm}[From~\cite{krause12survey}]\label{thm:submodular_bound}
Given a ground set $\Omega$, a set function $F : 2^{\Omega} \rightarrow \mathbb{R}$, and a positive integer $W \in \mathbb{N^+}$, let us consider the problem finding $\hat{S}^* = \arg\max_{S \subseteq \Omega, |S| \le W}F(S)$. Then if $F$ is submodular, we have that for every integer $0 \le l \le W$
$$F(S_l) \ge  (1 - e^{-l/W}) F(\hat{S}^*)$$
where $S_l \in \Omega$ is the set built after $l$ iterations of the following greedy element--selection rule
\begin{equation}
S_i = 
\begin{cases}
\emptyset		&	\text{if}\;$i=0$	\\
S_{i-1} \cup \arg\max_{s \in \Omega}F({S_{i-1} \cup \{s\})}		&	\text{else}
\end{cases}
\label{eq:greedy-selection-rule}
\end{equation}
 \end{thm}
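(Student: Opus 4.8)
The plan is to reproduce the classical Nemhauser--Wolsey--Fisher analysis of the greedy algorithm for submodular maximization, using submodularity together with the monotonicity and the normalization $F(\emptyset)=0$ that are in force throughout~\cite{krause12survey}. Fix the optimum $\hat{S}^*$ with $|\hat{S}^*|\le W$, and for every $i\ge 0$ let $S_i$ be the set produced after $i$ applications of rule~(\ref{eq:greedy-selection-rule}); set $\delta_i = F(\hat{S}^*) - F(S_i)$. Since $|S_l|=l\le W$, optimality of $\hat{S}^*$ gives $F(S_l)\le F(\hat{S}^*)$, hence $\delta_l\ge 0$, so it suffices to establish $\delta_l\le e^{-l/W}\,F(\hat{S}^*)$.

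The one nontrivial ingredient is the per-iteration progress inequality
\[
F(S_{i+1}) - F(S_i)\;\ge\;\frac{1}{W}\bigl(F(\hat{S}^*) - F(S_i)\bigr).
\]
To prove it I would write $\hat{S}^*\setminus S_i=\{o_1,\dots,o_m\}$ with $m\le W$ and expand $F(\hat{S}^*\cup S_i) - F(S_i)$ as the telescoping sum of the marginal gains of $o_1,\dots,o_m$ added one at a time on top of $S_i$. Submodularity bounds the marginal gain of $o_j$ (added after $o_1,\dots,o_{j-1}$) by its marginal gain when added directly to $S_i$, and the greedy rule~(\ref{eq:greedy-selection-rule}) bounds that in turn by $F(S_{i+1}) - F(S_i)$. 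Summing the $m\le W$ terms and using monotonicity, $F(\hat{S}^*)\le F(\hat{S}^*\cup S_i)\le F(S_i) + W\bigl(F(S_{i+1}) - F(S_i)\bigr)$, which rearranges to the displayed inequality.

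From the progress inequality, $\delta_{i+1}\le (1 - 1/W)\,\delta_i$, so by induction $\delta_l\le (1-1/W)^l\,\delta_0$; since $\delta_0 = F(\hat{S}^*) - F(\emptyset) = F(\hat{S}^*)$ and $1 - 1/W\le e^{-1/W}$, we obtain $\delta_l\le e^{-l/W}\,F(\hat{S}^*)$, i.e.\ $F(S_l)\ge (1 - e^{-l/W})\,F(\hat{S}^*)$, as claimed. The work is concentrated in the telescoping/marginal-bounding step — that is the only place where submodularity is really used, and is what I expect to be the crux; the remainder is a routine geometric-decay recursion, for which the only thing to verify is that the greedy marginals stay nonnegative (a consequence of monotonicity), so that iterating the bound is legitimate even when $l<W$.
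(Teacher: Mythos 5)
Your proposal is correct: it is the standard Nemhauser--Wolsey--Fisher greedy analysis (per-iteration progress inequality via telescoping over $\hat{S}^*\setminus S_i$, submodularity, and the greedy choice, then the geometric-decay recursion with $1-1/W\le e^{-1/W}$), which is precisely the argument behind the result the paper imports from~\cite{krause12survey} without giving any proof of its own, so there is nothing in the paper to diverge from. The one point worth keeping explicit, as you do, is that the bound requires $F$ to be monotone with $F(\emptyset)=0$ --- hypotheses the paper's statement omits but which are assumed in the cited source and do hold for the delivery-probability set function $F_D$ to which the theorem is later applied.
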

Theorem~\ref{thm:submodular_bound} states that greedily maximizing a submodular set function introduces a bounded suboptimality. Eventually, the bound converges to $(1-\frac{1}{e})$ ($\approx 0.63$) when $l = W$, that is, when the maximum number of selections allowed by the cardinality constraint is made.

In order to apply this result to Algorithm~\ref{alg:greedy}, we need to show that the problem of finding an optimal integer policy can be expressed as the maximization of a submodular set function subject to a cardinality constraint. This similarity can be shown by using the following simple formalism. Let us assume that each element in the ground set $e \in \Omega$ is a pair $(c,k)$ where $c \in C$ and $k \in \{1 \ldots \frac{K-1}{\epsilon}\}$. Then, every subset $S \subseteq \Omega$ can be uniquely associated to an integer policy that we denote as $\mu^S$. Intuitively, the correspondence between $S$ and $\mu^S$ is obtained by the following construction rule:
\[
\mu^S_c(k) = 
\begin{cases}
1		&	(c,k) \in S	\\
0		&	\text{else}
\end{cases}
\]
Therefore, the objective function for a policy $\mu^S$ can be rewritten as a set function $F(K, S)$.

The second necessary step is to derive a cardinality constraint to define the problem's feasibility region. In our problem, the feasibility of a policy is determined by the budget limit, namely by Constraint~(\ref{equation:budget}). For this reason, ideally one would like to find a $W$ such that $|S| > W$ if and only if $\mu^S$ violates Constraint~(\ref{equation:budget}). However, it can be easily shown that budget feasibility cannot be expressed with a cardinality constraint. The reason is straightforward. The budget of a policy does not solely depend on the number of transmitting slots, but also on how those slots are distributed among the different classes. Nevertheless, a necessary (not sufficient) cardinality upper bound can be determined via the following theorem. 
\begin{thm}\label{thm:cardinality_constraint}
Any feasible threshold integer policy cannot assign full probability of transmission to more than $W=\min \{ \max_c\{r_c(\boldsymbol\mu^{\emptyset})\}, \frac{K-1}{\epsilon}\}$, where $\boldsymbol\mu^{\emptyset}$  is the empy policy.\vspace{-0.3cm}
\end{thm}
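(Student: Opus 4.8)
The plan is to establish the bound separately for each class: given an arbitrary feasible threshold integer policy $\boldsymbol\mu$ with integer thresholds $(h_c)_{c\in C}$, I would show that $h_c\le r_c(\boldsymbol\mu^\emptyset)$ and, trivially, $h_c\le\frac{K-1}{\epsilon}$ for every $c$; taking the weaker of the two and then the maximum over the classes yields precisely $W$, so no class can transmit with full probability (that is, with $\mu_c(k)=1$) in more than $W$ sub-slots. The single idea behind the proof is an isolation estimate: the budget spent by $\boldsymbol\mu$ is never smaller than the budget class $c$ alone would spend with threshold $h_c$, and this ``solo'' budget is exactly the quantity whose root against $\Psi$ defines $r_c(\boldsymbol\mu^\emptyset)$.

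First I would expand the left-hand side of Constraint~(\ref{equation:budget}) on the refined $\epsilon$-grid and keep only the contributions attributable to class $c$. Since the threshold is integer, $\sum_k\mu_c(k)=h_c$, so $Q_{c,0,K}(\mu_c)=e^{-\lambda_c\epsilon\Delta h_c}$ and the transmission term of $c$ equals $\rho N_c(1-e^{-\lambda_c\epsilon\Delta h_c})$. If $\omega$ is the technology of $c$, then $\mu_c(k)=1$ makes $1-\prod_{c'\in C_\omega}(1-\mu_{c'}(k))=1$ in each of the $h_c$ sub-slots $k=0,\ldots,h_c-1$, so technology $\omega$ contributes at least $\beta\epsilon h_c$ to the beaconing term. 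All remaining summands of Constraint~(\ref{equation:budget}) being nonnegative, feasibility of $\boldsymbol\mu$ gives $\phi_c(h_c)\le\Psi$, where $\phi_c(h):=\rho N_c(1-e^{-\lambda_c\epsilon\Delta h})+\beta\epsilon h$.

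The second step identifies $\phi_c$ with the map defining $r_c(\boldsymbol\mu^\emptyset)$. Feeding $\boldsymbol\mu_{-c}=\boldsymbol\mu^\emptyset$ into the definition of $r_c$ makes the term $A$ and the quantity $\max_{c''\ne c}\{\frac{\beta\epsilon}{\rho}h_{c''}\}$ both vanish; the preliminary formula then returns a strictly positive number whenever $\Psi/(\rho N_c)<1$ (and is vacuous otherwise), so the branch that applies is always the second one, and $r_c(\boldsymbol\mu^\emptyset)$ is the solution of $\phi_c(h)=\Psi$. This solution is unique because $\phi_c$ is continuous, $\phi_c(0)=0$, and $\phi_c$ is strictly increasing, being the sum of the strictly increasing map $h\mapsto\rho N_c(1-e^{-\lambda_c\epsilon\Delta h})$ and the nondecreasing linear map $h\mapsto\beta\epsilon h$. (In the degenerate regime $\beta=0$, $\Psi\ge\rho N_c$ no finite root exists; there $r_c(\boldsymbol\mu^\emptyset)$ is effectively unbounded, the operative bound is $h_c\le\frac{K-1}{\epsilon}$, and $W=\frac{K-1}{\epsilon}$.) Putting the two steps together, $\phi_c(h_c)\le\Psi=\phi_c(r_c(\boldsymbol\mu^\emptyset))$ and strict monotonicity yield $h_c\le r_c(\boldsymbol\mu^\emptyset)$; combined with $h_c\le\frac{K-1}{\epsilon}$, which holds because a threshold policy on the $\epsilon$-grid cannot be active in more sub-slots than exist, we obtain $h_c\le\min\{r_c(\boldsymbol\mu^\emptyset),\frac{K-1}{\epsilon}\}\le\min\{\max_{c'}r_{c'}(\boldsymbol\mu^\emptyset),\frac{K-1}{\epsilon}\}=W$ for every class $c$, which is the claim.

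I expect the only delicate point to be the beaconing lower bound when several classes share a technology: one must note that switching off all classes other than $c$ can only reduce, never increase, the number of sub-slots in which technology $\omega$ is active (a monotonicity of the same flavor as the one underlying the proof of Property~\ref{prop:saturate_or_saturate}), so that the solo budget $\phi_c(h_c)$ is genuinely dominated by the full budget of $\boldsymbol\mu$. The rest -- the $\epsilon$-grid bookkeeping, the monotonicity and root characterization of $\phi_c$, and the corner cases $\Psi/(\rho N_c)\ge 1$ and $\beta=0$ -- is routine and each disposed of in a line.
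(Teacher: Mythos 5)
Your proof establishes a \emph{per-class} bound, $h_c \le \min\{r_c(\boldsymbol\mu^{\emptyset}), \frac{K-1}{\epsilon}\} \le W$ for every $c$, and that part of the argument (the isolation estimate $\phi_c(h_c)\le\Psi$ followed by monotone inversion of $\phi_c$) is sound. But it proves a weaker statement than the one the theorem is meant to deliver. The theorem is invoked as a \emph{cardinality} constraint on the ground set $\Omega$ of pairs $(c,k)$: the subsequent development maximizes $F_D(K,S)$ subject to $|S|\le W$ in order to apply Theorem~\ref{thm:submodular_bound}, and the paper's own proof opens with ``consider a threshold policy $\mu^S$ where $|S|>W$''. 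Here $|S|=\sum_{c}h_c$ is the \emph{total} number of transmitting sub-slots summed over all classes, not the largest single threshold. Your conclusion only yields $|S|\le |C|\cdot W$, which does not justify the constraint $|S|\le W$ that the greedy approximation bound is built on.

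The missing idea is a mass-transfer step. The paper's argument is: if a policy with $|S|>W$ were feasible, reassign every element $(c,k)\in S$ with $c\neq\hat c$ (where $\hat c=\arg\max_c r_c(\boldsymbol\mu^{\emptyset})$) to the tail of class $\hat c$'s allocation; the resulting policy concentrates all $|S|$ slots on the single class $\hat c$, is claimed to remain budget-feasible, and therefore contradicts the fact that $\hat c$ alone cannot accommodate more than $W\ge$ neither $r_{\hat c}(\boldsymbol\mu^{\emptyset})$ slots within budget nor $\frac{K-1}{\epsilon}$ slots within the horizon. The substantive content you would still have to supply is precisely the monotonicity of the budget under this reassignment toward $\hat c$ --- an argument about decreasing marginal slot costs across heterogeneous classes, which is not implied by your per-class isolation estimate. (Your observations about the degenerate cases $\Psi\ge\rho N_c$, $\beta=0$, and about shared-technology beaconing are fine, but they address the easy part of the proof.)
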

\begin{proof}
Let us assume that $\hat{c}=\arg\max_c\{r_c(\boldsymbol\mu^{\emptyset})\}$. Then, consider a threshold policy $\mu^S$ where $|S| > W$. If $\mu^S$ is budget--feasible then, by definition, the policy obtained in this way should be feasible too: for every $(c,k) \in S$ where $c \neq \hat{c}$ substitute $(c,k)$ with $(\hat{c},h_{\hat{c}} + 1)$. However, by definition of $W$ such a policy cannot be budget--feasible.
\end{proof}

Under the above assumption, the optimal integer policy problem can be associated, up to a relaxation of the feasibility constraint, to the maximization of the set function $F_D(K, S)$, subject to $|S| \le W$. In the next step we show the submodularity of $F_D$. 

\begin{prope}\label{prope:submodularity}
The set function $F$ is submodular with respect to $\Omega$.\label{prope:submodularity}\vspace{-0.3cm}
\end{prope}
\begin{proof}
First, let us consider a setting with a single class. From Property~\ref{prope:threshold_optimal}, we can focus only on threshold policies and rewrite $F$ as a function of $h$, namely the threshold value (this value, in general, can be non--integer). Then it can be easily shown that $F(h)$ is a concave function since the Hessian matrix has strictly negative eigenvalues. Given a function $f: \mathbb{N} \rightarrow \mathbb{R}^+$, then $f(|S|)$ is submodular on the subsets $S$ of an arbitrary set $\Omega$ if and only if $f$ is concave. We can then conclude that $F$ is submodular in the case of a single class. Let us now show submodularity for the case with two classes. Let us denoted with $\Delta F (S|e)$ the marginal gain of $F$ obtained by adding the element $e$ to the set $S$, namely adding a transmitting slot to some class to the policy $\mu^S$. For submodularity to hold, we need to show that for every $S_a$, $S_b$, $e$ such that $S_a \subseteq S_b \subset \Omega$ and $e \in \Omega \setminus  S_b$ we have that $\Delta F (S_a|e) \ge \Delta F (S_b|e)$. By definition $e$ adds a slot to a single class, let us assume without loss of generality that this class is $c_1$. Then we have:
\begin{multline*}
\Delta F (S_a|e) = [1 - (1-(F_{c_1}(S)  +\Delta F_{c_1} (S_a|e)))(1-F_{c_2}(S))] \\ - [1 - (1-F_{c_1}(S))(1-F_{c_2}(S))] \\ =(1-F_{c_2}(S_{a}))\Delta F_{c_1} (S_a|e)
\end{multline*}
and, analogously,
$$
\Delta F (S_b|e) = (1-F_{c_2}(S_{b}))\Delta F_{c_1} (S_b|e)
$$
Since $\Delta F_{c_1} (S_a|e) \ge \Delta F_{c_1} (S_b|e)$ by submodularity of $F_{c_1}$ and  $F_{c_2}(S_{b}) \ge F_{c_2}(S_{a})$ by $F_{c_2}$ monotonicity, we have that $F$ is submodular. The same reasoning can be extended to an arbitrary number of classes.
\end{proof}

Theorem~\ref{thm:submodular_bound} can be applied by showing that Algorithm~\ref{alg:greedy} corresponds to the greedy element-selection rule reported in~(\ref{eq:greedy-selection-rule}). It is easy to see that rule~(\ref{eq:greedy-selection-rule}), when applied to the integer policy problem, proceeds by locally optimal appends in the same way that Algorithm~\ref{alg:greedy} does.  Hence, we are now in the position of state the following theorem:
\begin{thm}
Let us denote with $S^*$ the policy returned by Algorithm~\ref{alg:gridepsilon} and with $S^1_l$ is the policy constructed by Algorithm~\ref{alg:greedy} (version 1) after $l$ iterations. We then have that $F_D(K,S^1_l) \ge (1 - e^{-l/W})F_D(K,S^*)$.
\end{thm}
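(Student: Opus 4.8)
\emph{Proof sketch (plan).}
The statement is meant to be essentially a corollary of Theorem~\ref{thm:submodular_bound}, so the plan is to verify its three hypotheses in our setting and then to bridge the gap between the cardinality--constrained optimum appearing there and $F_D(K,S^*)$.

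First I would reuse the discrete framework set up just before the statement: the ground set $\Omega=\{(c,k): c\in C,\ 1\le k\le \frac{K-1}{\epsilon}\}$, the bijection $S\mapsto\boldsymbol\mu^S$ between subsets of $\Omega$ and integer threshold policies, and the set function $F(S):=F_D(K,\boldsymbol\mu^S)$. By Property~\ref{prope:submodularity}, $F$ is (monotone) submodular on $2^\Omega$, which is the first hypothesis. By Theorem~\ref{thm:cardinality_constraint}, every budget--feasible integer threshold policy has $|S|\le W$ with $W=\min\{\max_c r_c(\boldsymbol\mu^{\emptyset}),\frac{K-1}{\epsilon}\}$, so $\max_{|S|\le W}F(S)$ is a relaxation of the feasible integer--policy problem; this fixes the cardinality parameter $W$ of Theorem~\ref{thm:submodular_bound}. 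Finally, as already observed above the statement, the first $l$ moves of Algorithm~\ref{alg:greedy} in version~1 coincide with the greedy element--selection rule~(\ref{eq:greedy-selection-rule}) applied to $F$: at each iteration the algorithm compares, over all classes, the marginal gain $\delta_c=G_1(\boldsymbol\mu)=F_D(K,\boldsymbol\mu)-F^*$ of the append $(c,k_c+1)$ and commits to the maximizer, and — since threshold policies are optimal (Property~\ref{prope:threshold_optimal}) — for each class the best element to add is precisely the next slot $(c,k_c+1)$, so this is exactly $\arg\max_{e\in\Omega\setminus S_{i-1}}F(S_{i-1}\cup\{e\})$. The only discrepancy is the $\min\{1,r_c(\boldsymbol\mu)\}$ clause, which produces a fractional last slot at most at the final iteration and, being a further increase of $\sum_h\mu_c(h)$, can only raise $F_D$; hence the bound proved for the integer prefix survives. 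Theorem~\ref{thm:submodular_bound} then gives $F_D(K,S^1_l)\ge(1-e^{-l/W})\max_{|S|\le W}F(S)$.

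It remains to prove $\max_{|S|\le W}F(S)\ge F_D(K,S^*)$. The policy $S^*$ returned by Algorithm~\ref{alg:gridepsilon} is budget--feasible and threshold, with at most one class carrying a fractional threshold; let $\bar S^*$ be the integer policy obtained by rounding that single fractional threshold up to the next integer. Since $F_D$ is monotone in each $\sum_h\mu_c(h)$ (the monotonicity already used in Property~\ref{property:budget}), $F_D(K,\bar S^*)\ge F_D(K,S^*)$; and because the integer core of $S^*$ is itself budget--feasible it has at most $W$ slots, so $|\bar S^*|\le W$, using that $W$ is only a \emph{necessary}, non--tight cardinality bound and thus absorbs the single extra unit coming from the rounding. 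Hence $\bar S^*$ competes in $\max_{|S|\le W}F(S)$, giving $\max_{|S|\le W}F(S)\ge F_D(K,\bar S^*)\ge F_D(K,S^*)$; combining with the previous paragraph yields $F_D(K,S^1_l)\ge(1-e^{-l/W})F_D(K,S^*)$.

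The step I expect to be the real obstacle is this last one: everything else is bookkeeping on top of Theorems~\ref{thm:submodular_bound}--\ref{thm:cardinality_constraint} and Property~\ref{prope:submodularity}, whereas reconciling the purely integer set--function machinery with the object $S^*$ actually produced by Algorithm~\ref{alg:gridepsilon} — which carries one fractional threshold and \emph{exactly} saturates the budget — requires the fiddly rounding argument above together with a careful check that $\bar S^*$ stays within the cardinality bound. A cleaner write--up would first state and prove the bound against $\hat S^*:=\arg\max_{|S|\le W}F(S)$ over integer policies, and then separately establish $F_D(K,\hat S^*)\ge F_D(K,S^*)$, isolating the fractional--to--integer passage in a short lemma.
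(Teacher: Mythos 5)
Your proof follows essentially the same route as the paper's: apply Theorem~\ref{thm:submodular_bound} to the greedy construction to get $F_D(K,S^1_l) \ge (1-e^{-l/W})F_D(K,\hat{S}^*)$ against the cardinality-relaxed optimum $\hat{S}^*$, and then dominate $F_D(K,S^*)$ by $F_D(K,\hat{S}^*)$ on the grounds that the cardinality-constrained problem is a relaxation. The paper states exactly these two steps in two lines; your additional work on the second step --- rounding the single fractional threshold of $S^*$ up and checking the result stays within the cardinality bound $W$ --- fills in a passage the paper leaves implicit, though your claim that $W$ ``absorbs the single extra unit coming from the rounding'' is asserted rather than proved and is the one sub-step you would still need to justify (it is also precisely the point the paper itself glosses over).
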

\begin{proof}
The inequality stated in the theorem follows immediately from the following two properties. First, by applying Theorem~\ref{thm:submodular_bound} to Algorithm~\ref{alg:greedy} (version 1) we have that $F_D(K,S^1_l) \ge (1 - e^{-l/W})F_D(K,\hat{S}^*)$. Second, since $\hat{S}^*$ is the optimal solution of a relaxed version of the integer policy problem, it holds that $F_D(K,S^*) \le F_D(K,\hat{S}^*)$ .
\end{proof}

The previous theorem, provides an \emph{online} bound on the solution quality,  being it dependent on the number of iterations the algorithm will succeed in performing without violating the budget constraint. An \emph{offline} guarantee can be given by computing the minimum number of slot $s_c$ to be assigned to each class $c$. This number can be computed by setting $\mu_{c'}(i)=1 \; \forall i \; 0 \le i \le K, c' \neq c$ and computing the maximum number of time slots during which $c$ can transmit without saturating the budget.

\begin{cor}
For any solution $\hat{S}^1$ obtained with Algorithm~\ref{alg:greedy} (version 1) we have that $F_D(K,\hat{S}^1) \ge (1 - e^{-\sum\limits_{c 	\in C}{s_c}/W})F_D(K,S^*)$.
\end{cor}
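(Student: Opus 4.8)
The plan is to obtain the stated offline bound from the preceding theorem (the online bound $F_D(K,S^1_l)\ge(1-e^{-l/W})F_D(K,S^*)$) by guaranteeing a lower bound on the number of iterations performed by Algorithm~\ref{alg:greedy} (version~1), and then transferring the guarantee to the final solution $\hat S^1$ via monotonicity of $F_D$. The first ingredient is that the integer policy $\boldsymbol\mu^\dagger$ with $h_c=s_c$ for every class $c$ is budget--feasible. The left--hand side of Constraint~(\ref{equation:budget}) is coordinate--wise nondecreasing in the number of transmitting slots of each class: the transmission term depends on $\sum_i\mu_c(i)$ through the increasing map $1-e^{-\lambda_c\Delta\sum_i\mu_c(i)}$, and every beaconing term $1-\prod_{c\in C_\omega}(1-\mu_c(k))$ is nondecreasing in each $\mu_c(k)$. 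By definition $s_c$ is the largest number of slots for $c$ keeping the budget satisfied when every other class transmits in all $\frac{K-1}{\epsilon}$ slots; shrinking those ``all--slots'' policies down to the $s_{c'}$--slot policies can only lower the consumed budget, so $\boldsymbol\mu^\dagger$ is feasible.

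The second, and main, step is to show that Algorithm~\ref{alg:greedy} (version~1) performs at least $\ell:=\sum_{c\in C}s_c$ full--slot appends. Equivalently, every feasible integer threshold policy admitting no feasible one--more--full--slot extension has at least $\ell$ transmitting slots overall. Indeed, if a feasible integer policy $\boldsymbol\mu$ had $\sum_c h_c<\ell$, then some class $\bar c$ would satisfy $h_{\bar c}<s_{\bar c}$, and the policy obtained from $\boldsymbol\mu$ by appending one full slot to $\bar c$ has cost no larger than that of the policy in which $\bar c$ transmits $h_{\bar c}+1\le s_{\bar c}$ slots and all other classes transmit in every slot (coordinate--wise monotonicity of the budget again), which is feasible by the very definition of $s_{\bar c}$. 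Hence $\boldsymbol\mu$ is not a stopping policy, so the greedy construction necessarily reaches $\ell$ full appends; note $\ell\le|C|\tfrac{K-1}{\epsilon}$, consistent with the trivial termination when all slots are already saturated.

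Finally I would apply the preceding theorem with $l=\ell$, giving $F_D(K,S^1_\ell)\ge(1-e^{-\ell/W})F_D(K,S^*)$ for the policy $S^1_\ell$ built after $\ell$ iterations; since $F_D(K,\cdot)$ is monotone nondecreasing in every class's transmitting slots (cf. the proof of Property~\ref{prop:saturate_or_saturate}) and $\hat S^1$ is obtained from $S^1_\ell$ by appending further (possibly fractional) slots, we get $F_D(K,\hat S^1)\ge F_D(K,S^1_\ell)\ge(1-e^{-\sum_c s_c/W})F_D(K,S^*)$. The delicate point is exactly the second step: one must exclude that the greedy over--allocates some class so early that no feasible full append remains while $\sum_c h_c<\ell$, which is precisely what the monotonicity argument precludes because each $s_c$ is defined against the worst case where all other classes are maxed out. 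A minor technical matter is reconciling the possibly fractional final append of Algorithm~\ref{alg:greedy} with the integer--policy framework behind the online bound: only full appends are counted toward $\ell$, and fractional top--ups only increase $F_D$, so they do not affect the estimate (and if $\ell>W$ one additionally uses that $1-e^{-l/W}\le 1-(1-1/W)^l$ holds for all $l\ge 0$, so the submodular bound still applies).
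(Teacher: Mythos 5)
Your proposal follows exactly the route the paper intends: the paper states this corollary with no proof at all, relying only on the preceding paragraph's definition of $s_c$ (the number of slots class $c$ can be granted even when every other class transmits in all slots) and an implicit application of the online bound with $l=\sum_{c}s_c$. You correctly supply the two facts the paper leaves unstated --- feasibility of the $(s_c)_c$ allocation via coordinate-wise monotonicity of the budget, and the guarantee that the greedy construction cannot stall before $\sum_c s_c$ full appends --- together with the monotonicity step transferring the bound to $\hat S^1$, so your write-up is, if anything, more complete than the paper's.
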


\subsubsection*{Second version, normalizing $G_1$ with budget costs}
The second version of our algorithm is an improvement to the previous version that holds when no beaconing costs are considered. Here $G_2$ is obtained by normalizing $G_1$ with the budget cost that an additional time slot will introduce. In other words, $\delta_c$ will represent a ration between benefits and costs. Under the assumption that no beaconing costs are present and that we deal with threshold policies, each transmission has an independent cost and the budget spent by a policy $S$ is given by:
$$\psi(S) =  \sum\limits_{(c,k) \in S}N_c e^{-\lambda_c \Delta (k-1)}(1 - e^{-\lambda_c \Delta})$$
and, consequently, $$G_2(\boldsymbol\mu) =\frac{G_1(\boldsymbol\mu)}{\psi(\{(c,h_c + 1)\})}$$

If we modify rule~(\ref{eq:greedy-selection-rule}) by normalizing the objective function by the budget cost for each candidate element, we can again show the equivalence between the new rule and Algorithm~\ref{alg:greedy} (version 2). As a consequence, we can again resort to a result presented in~\cite{krause12survey} and provide a quality bound on the solution obtained with the combination of the two versions of Algorithm~\ref{alg:greedy} when beaconing costs are not considered.
\begin{thm}
If no beaconing costs are present, then it holds that
$$\max \{ F_D(K,\hat{S}^1), F_D(K,\hat{S}^2) \} \ge \frac{1}{2} (1 - \frac{1}{e}) \max\limits_{S \subseteq \Omega \\ \psi(S) \le \Psi} F_D(K, S)$$\vspace{-0.3cm}
\end{thm}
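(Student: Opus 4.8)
The plan is to recognize this statement as an instance of monotone submodular maximization subject to a knapsack constraint and to invoke the classical ``take the better of the gain--greedy and the ratio--greedy'' guarantee (the $\tfrac12(1-\tfrac1e)$ bound underlying CELF, cited here as~\cite{krause12survey}).

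First I would assemble the three structural ingredients needed to put the problem in that framework. (i) The objective is normalized: $F_D(K,\emptyset)=0$, since an empty policy forces every $X^*_{c,h}=1$. (ii) $F_D(K,\cdot)$ is monotone nondecreasing and submodular on $\Omega$: monotonicity is the observation used in the proof of Property~\ref{prop:saturate_or_saturate} (the objective is increasing in each $\sum_{h}\mu_c(h)$), and submodularity is exactly Property~\ref{prope:submodularity}. (iii) When no beaconing costs are present, the budget constraint~(\ref{equation:budget}) collapses to a \emph{modular} (additive) cost. Restricting attention to threshold policies (legitimate by Property~\ref{prope:threshold_optimal}), class $c$ occupies slots $1,\dots,h_c$, so its transmission cost $\rho_c N_c(1-e^{-\lambda_c\Delta h_c})$ telescopes into the geometric per--slot sum $\sum_{k=1}^{h_c}N_c e^{-\lambda_c\Delta(k-1)}(1-e^{-\lambda_c\Delta})$; hence $\psi(S)=\sum_{(c,k)\in S}\psi(\{(c,k)\})$, i.e.\ every element $(c,k)$ carries a fixed weight and ``$\psi(S)\le\Psi$'' is a genuine knapsack constraint.

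Next I would match the two versions of Algorithm~\ref{alg:greedy} to the two greedy rules analyzed in~\cite{krause12survey}. Version~1 appends, at each step, the element maximizing the raw marginal gain $\delta_c$ and halts when the next append would violate the budget; this is precisely the \emph{unit--cost} (gain) greedy for the knapsack problem, producing $\hat S^1$. Version~2 maximizes instead $\delta_c/\psi(\{(c,k_c+1)\})$, the marginal gain per unit of budget, which is the \emph{cost--benefit} (ratio) greedy, producing $\hat S^2$. Having identified the two runs with the abstract rules, the cited theorem yields directly
\[
\max\{F_D(K,\hat S^1),F_D(K,\hat S^2)\}\ \ge\ \tfrac12\bigl(1-\tfrac1e\bigr)\,\max_{S\subseteq\Omega,\ \psi(S)\le\Psi}F_D(K,S),
\]
which is the claim. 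Note that this route does not need the cardinality relaxation used for the online bound of version~1: the knapsack constraint is handled directly.

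The main obstacle I anticipate is boundary bookkeeping rather than anything deep. One must (a) confirm that $\psi$ is \emph{exactly} modular, which relies on the restriction to threshold policies and the telescoping identity, together with the fact that the greedy never profits from a non--threshold set (because $F_D$ depends on $\mu_c$ only through $\sum_h\mu_c(h)$ while the earliest slots are the cheapest); and (b) handle the possibly fractional last slot that Algorithm~\ref{alg:greedy} assigns via $\min\{1,r_c(\boldsymbol\mu)\}$, so that the discrete construction still conforms to the abstract element--selection rule up to a slot of fractional ``mass'' that only increases $F_D$ and only helps both sides of the inequality. Once these two points are pinned down, the theorem is an immediate corollary of the submodular--knapsack greedy analysis of~\cite{krause12survey}.
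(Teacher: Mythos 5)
Your proposal follows essentially the same route as the paper: the paper's proof is a one-line appeal to the ``better of gain-greedy and cost-benefit greedy'' $\frac{1}{2}(1-\frac{1}{e})$ guarantee from~\cite{krause12survey}, relying on the preceding observations that $F_D$ is monotone and submodular (Property~\ref{prope:submodularity}) and that, without beaconing costs, $\psi$ is an additive per-element cost, which is exactly your framework. Your write-up merely makes explicit the bookkeeping (normalization, the telescoping modularity of $\psi$, the fractional final slot) that the paper leaves implicit, so it is correct and not a genuinely different argument.
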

\begin{proof}
The proof follows immediately by the consideration made above and a straightforward adaptation of results presented in~\cite{krause12survey}.
\end{proof}

%
%


\section{Experimental evaluation}
In this section, we provide some experimental evaluations of the proposed algorithms. Results are obtained from MATLAB simulations and are aimed at showing the feasibility of our approach and evaluating its performance in terms of solution's quality. We shall also discuss some qualitative issues observed in the obtained policies.

\subsection{Experimental setting}
Each instance of our problem is described by different parameters. In our experiments, we generated istances by considering finite sets of values for each parameter, see Table~\ref{tab:parameters} for a complete summary. In particular, we devote our attention to three different mobility profiles and to three different transmission technologies for mobile nodes. Mobility profiles are characterized by increasing average speeds. The scenario that we imagine is populated by mobile devices carried by pedestrians, users on bycycles, and users on vehicles, respectively. The transmission technologies we consider provide increasing communication ranges: ZigBee, Bluetooth 4.0. and Wi-Fi Direct. We derive the corresponding values for $\rho$ and $\beta$ by considering the technical specifications of each technology and assuming an application scenario where a single packet has a size of $5$kB and a slot interval $\Delta=10s$. For simplicity, we assign the same number of users to each class.

\begin{table}[h]
\caption{Parameters used for experiments}
\label{tab:parameters}
\centering
\begin{tabular}{|c|}
\hline
temporal deadline for delivery ($\tau$)\\ 
\hline
25, 50, 100, 250 \\
\hline
\end{tabular}
\vspace{0.1cm}
\begin{tabular}{|c|}
\hline
radius of the environment ($L$)\\ 
\hline
350, 500, 750,1000 \\
\hline
\end{tabular}
\begin{tabular}{|c|}
\hline
number of nodes $N_c$\\ 
\hline
9, 15, 20 \\
\hline
\end{tabular}
\begin{tabular}{|c|}
\hline
mobility profiles ($v_c$) \\
\hline
pedestrians (1.5$m/s$)\\
bicycles (6$m/s$)\\
vehicles (9$m/s$)\\
\hline
\end{tabular}
\begin{tabular}{|c|}
\hline
transmission technologies \\
\hline
ZigBee ($R=15m$) \\
Bluetooth 4.0 ($R=50m$)\\
Wi-Fi Direct ($R=100m$)\\
\hline
\end{tabular}\vspace{-0.5cm}
\end{table}

In the experimental results proposed here, we consider up to 3 classes and a temporal discretization varying according to $\epsilon \in \{1, 1/3, 1/5\}$. The reason behind this choice can be intuitively described by the two graphs of Figure~\ref{fig:bounds}, where we depict the theoretical lower bound from Theorem~\ref{thm:lower_bound} with respect to different resolutions and numbers of classes. As it can be seen, a maximum resolution of $\epsilon = 1/5$ represents a reasonable choice to guarantee about $95\%$ of the optimal solution quality without the burden of a prohibitive number of time slots. On the other side, by adopting a maximum number of 3 classes we obtain a case which is fairly close to the worst case (derived for an infinite number of classes) and that is computable by means of our grid algorithm (as discussed in the following---we recall that our grid search requires compute time that is exponential in the number of classes). Finally, we remark that we chose a small number of nodes for simplicity in our experiments, but that the compute time of all our algorithms is linear in the number of nodes.

\begin{figure}[!htbp]
\vspace{-0.2cm}
\centering
\includegraphics[scale=0.35]{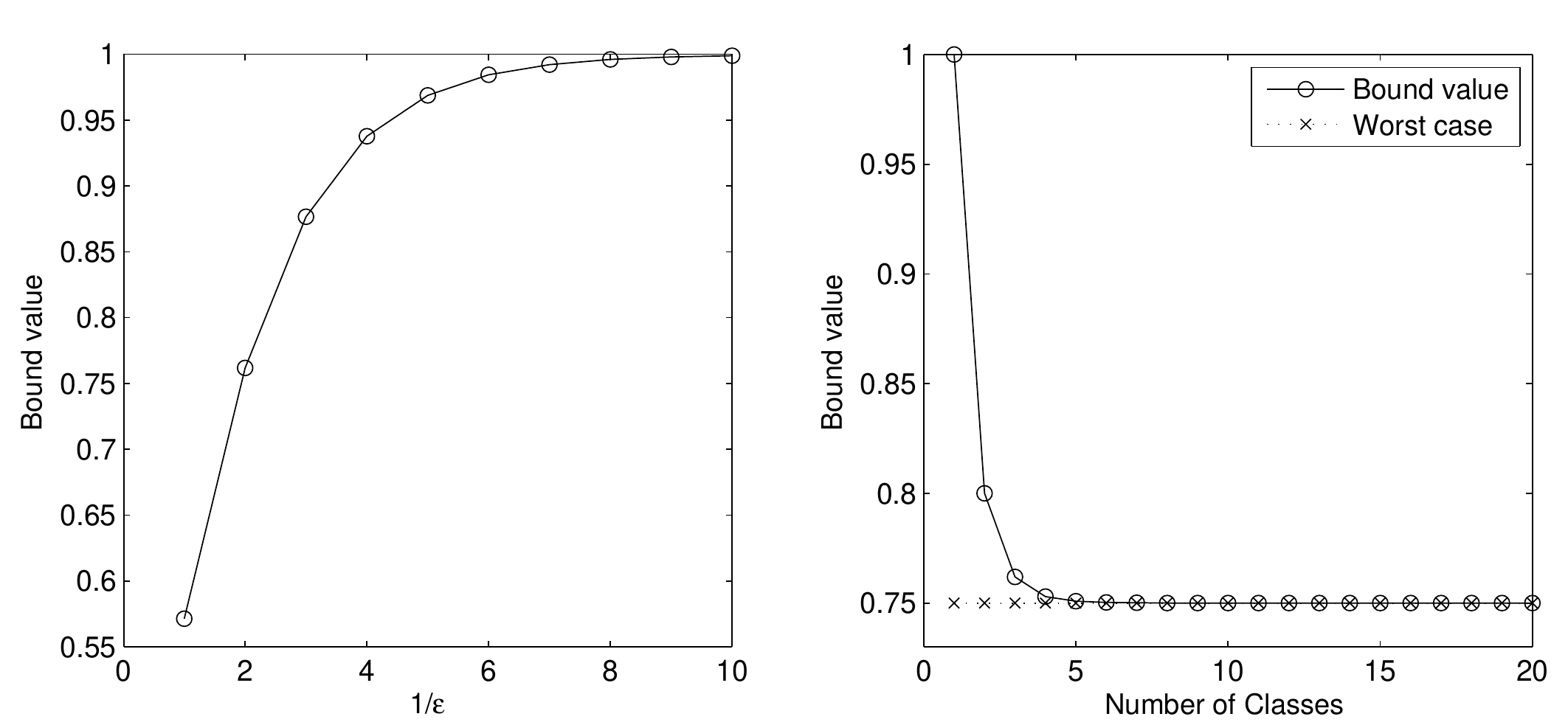}
\vspace{-0.2cm}
\caption{Theoretical lower bound over solution's quality (Theorem~\ref{thm:lower_bound}).}
\vspace{-0.5cm}
\label{fig:bounds}
\end{figure}

\subsection{Benchmarks}

We compare the performance of our algorithm w.r.t. the performance of two heuristic easily--computable algorithms and of an upper bound over the value of the optimal policy.

\subsubsection{Greedy on arrival rate} 

This algorithm works as follows: it sorts the classes in descending order of $\lambda_c$, then it allocates all the possible budget to the classes from the first one in the order to the last one. For instance, given three classes with $\lambda_1=0.3,\lambda_2=0.2,\lambda_3=0.1$, the algorithm assigns all the possible budget to class $1$ and, if there is a remaining budget, then all the remaining budget is assigned to class $2$ and so on. The rationale is that we expect that the larger the arrival rate the larger the delivery probability. The complexity of this algorithm is obviously easy given that the policy can be found by solving at most $|C|$ equations.

\subsubsection{Class--independent policies}

This algorithm searches for the optimal solution of an overconstrained problem in which: the policies related to all the classes are the same, formally $\mu(k)=\mu_c(k)$ for all $c$, and, when the policy is probabilistic, then either the source transmits to all the classes or it does not transmit at all. This last assumption leads to a new formulation of the budget constraint:
\[
\sum_{c\in C} \rho_c N_c \cdot (1- Q_{c,0,K}(\mu)) + \sum_{\omega\in \Omega}\sum_{k=0}^{K-1}\beta_\omega\cdot \mu(k) \leq \Psi			
\]
By Property~\ref{property:budget}, the optimal policy is such that the budget $\Psi$ is completely consumed and therefore the above inequality holds with equality. Therefore, the optimization problem reduces to the problem of finding the policy that completely consumes the budget. Formally, interpreting the (class--independent) threshold $h$ as a continuous variable, we can write:
\[
g(h)= \sum_{c\in C} N_c \cdot e^{-\lambda_c \Delta h} - \sum_{\omega\in \Omega}\left(\sum_{c\in C_\omega} N_c -\frac{\beta_\omega}{\rho_c} \cdot h+ \frac{\Psi}{\rho_c} \right)= 0
\]
Function $g$ is a single--variable function strictly monotonically decreasing in $h$ and infinitely differentiable. Such a function admits only one zero, and therefore the above equation admits only one solution. Such a solution can be found (approximately) by using the Newton method, that in this case, due to the property of the function, has a quadratic convergence speed (the number of correct digits roughly at least doubles in every step). Thus, we obtain an approximate solution of high quality within very short time.

\subsubsection{Upper bound over the optimal value}

An over bound over the value of the optimal solution can be found by using a variation of the algorithm described in Section~\ref{subsection:gridalgorithm}. More precisely, we use Algorithm~\ref{alg:gridepsilon} to enumerate all the policies consuming entirely the budget and we change each policy rounding each $h_c$ to the smallest integer and then adding $1$ for every $c$. Notice that these new policies violate the budget constraint. Among all these policies we find the one maximizing the delivery probability. Its value is an upper bound over the value of the optimal policy. In the graphs we denote this value as $UB$. The proof follows. Call $\boldsymbol \mu^*$ the optimal policy profile with (potentially fractional) thresholds $h_c^*$. Call $\hat{\boldsymbol \mu}$ a generic policy profile obtained as described above. It can be easily  observed (it follows from the fact that, fixed the policies of all the classes but one, the policy of the remaining class that consumes entirely the budget is always one) that there alway exists a policy profile $\hat{\boldsymbol \mu}$ such that $\hat{h}_c\geq h^*_c$ for all $c$. Therefore, given that the objective function is strictly monotone in $h_c$, the objective value of  $\hat{\boldsymbol \mu}$ is strictly better than the value $\boldsymbol \mu^*$.

\subsection{Experimental results}
Figure~\ref{fig:mean_fub_tau} reports how $F_D / UB$ varies as the values of the parameters $\tau,L,N_c$ vary as summarized in~Table~\ref{tab:parameters}, $|C|\in\{1,2,3\}$, and $\frac{1}{\epsilon}=5$. For each parameter, we average $F_D / UB$ over the other instances sharing the same value for that parameter. It can be observed that grid search and greedy constructions obtain a remarkable better performance in each case when compared with the benchmarking greedy algorithms based on the arrival rate and the class-independent one. Not exploiting the knowledge about the different classes and solely considering the arrival rate turned out to achieve very similar performances. By increasing the value of $\tau$, it can be seen how this gap with the benchmarks shrinks, suggesting the intuition that when the deadline for packet delivery is large even simplistic policies are able to obtain good delivery probabilities. Another aspect that can be observed is that greedy constructions revealed to be quite effective for the tested cases, since they were able to obtain high performances comparable to the grid search. By increasing the value of $L$, it can be seen how this gap with the benchmarks increases, instead the gap keeps to be approximately constant as $N_c$ and $C$ vary. Interestingly, the approximation ratio of our algorithms is  constant (i.e., $>99\%$) w.r.t. all the parameters values.
\begin{figure}[!htbp]
\centering
\vspace{-0.2cm}
\includegraphics[scale=0.4]{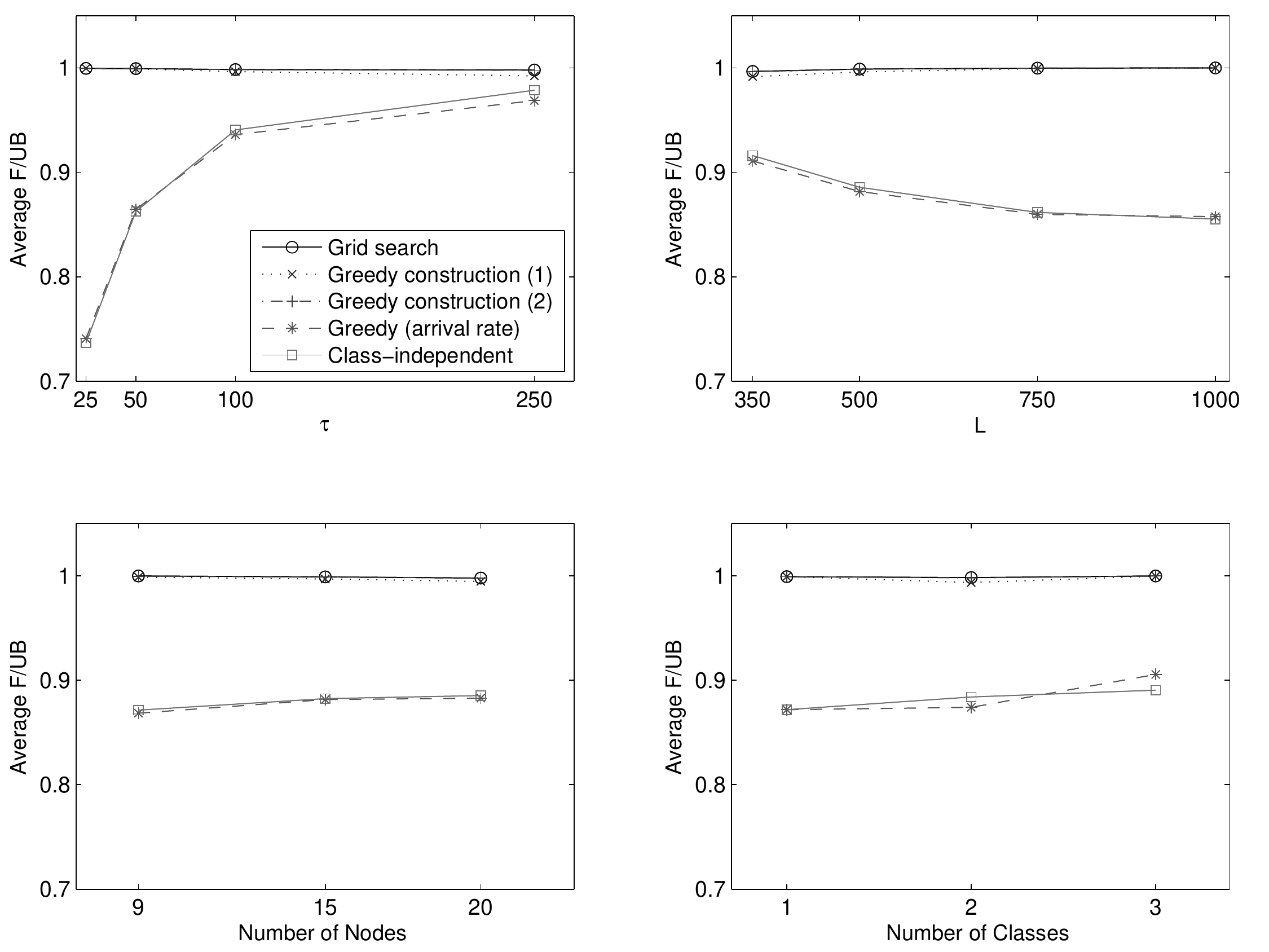}
\vspace{-0.4cm}
\caption{Average $F_D/UB$ with respect to different parameters at $\frac{1}{\epsilon}=5$.}
\vspace{-0.4cm}
\label{fig:mean_fub_tau}
\end{figure}

A more detailed overview on how the performance varies with respect to $\tau$ is shown by the boxplots of Figure~\ref{fig:fub_tau_boxplot}. These graphs show the similarity in performance between the grid search and the greedy constructions algorithms. These last ones obtained worse performances for a limited number of outlier instances. Also it is evident how having finer resolutions remarkably improves the solution's quality.
\begin{figure}[!htbp]
\centering
\vspace{-0.2cm}
\includegraphics[scale=0.45]{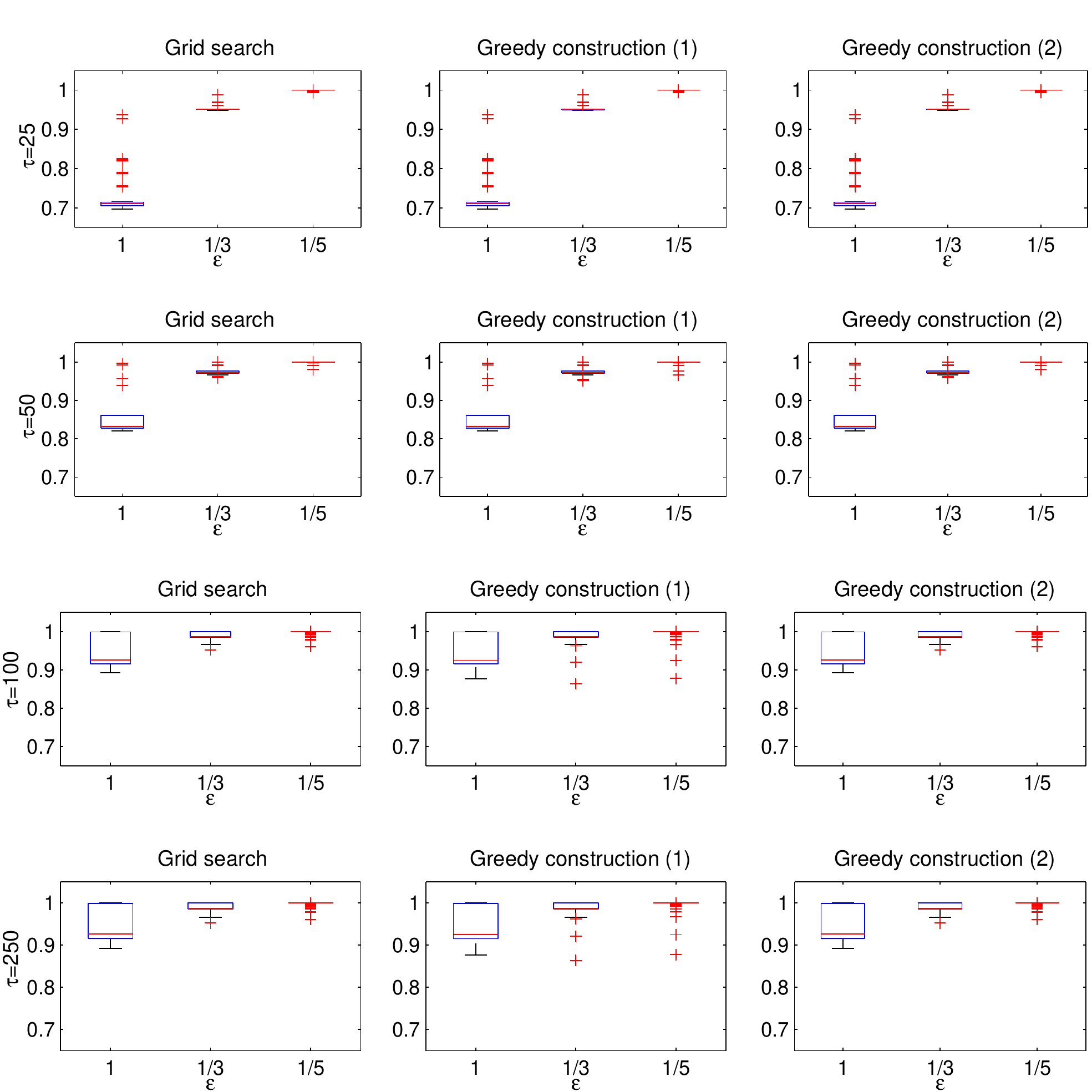}
\vspace{-0.4cm}
\caption{Boxplots showing $F_D/UB$ w.r.t. $\tau$ for different algorithms.}
\vspace{-0.4cm}
\label{fig:fub_tau_boxplot}
\end{figure}

The above results suggest that greedy constructions seem to be quite effective approaches to approximate the optimal policy requiring, at the same time, much less computational effort than the grid search. In Figure~\ref{fig:time_greedy}, we show a comparison between computational times obtained with the grid search and the greedy construction algorithms respectively. In particular, we evaluated the algorithms' scalability when the number of classes grows. To obtain these results we fixed the values of some parameters ($\epsilon = 1/3$, $\tau = 100$, $N_c=10$, $L=500$) and we generated random mobility profiles and transmission technologies by uniformly sampling from the following intervals: $R_c \in [15,50]$, , $v_c \in [1, 15]$ $\rho_i \in [0.05, 0.25]$ $\beta_c \in [3 \times 10^{-7}, 8 \times 10^{-7}]$. It is easy to see how grid search shows an exponential growth in time, while greedy construction proved to be much more efficient even for larger number of classes. Considering a deadline of 1 hour, grid search was not able to compute a solution for more than 4 classes, while greedy construction managed to compute solution up to $800$ classes.
\begin{figure}[!htbp]
\centering
\vspace{-0.3cm}
\includegraphics[scale=0.4]{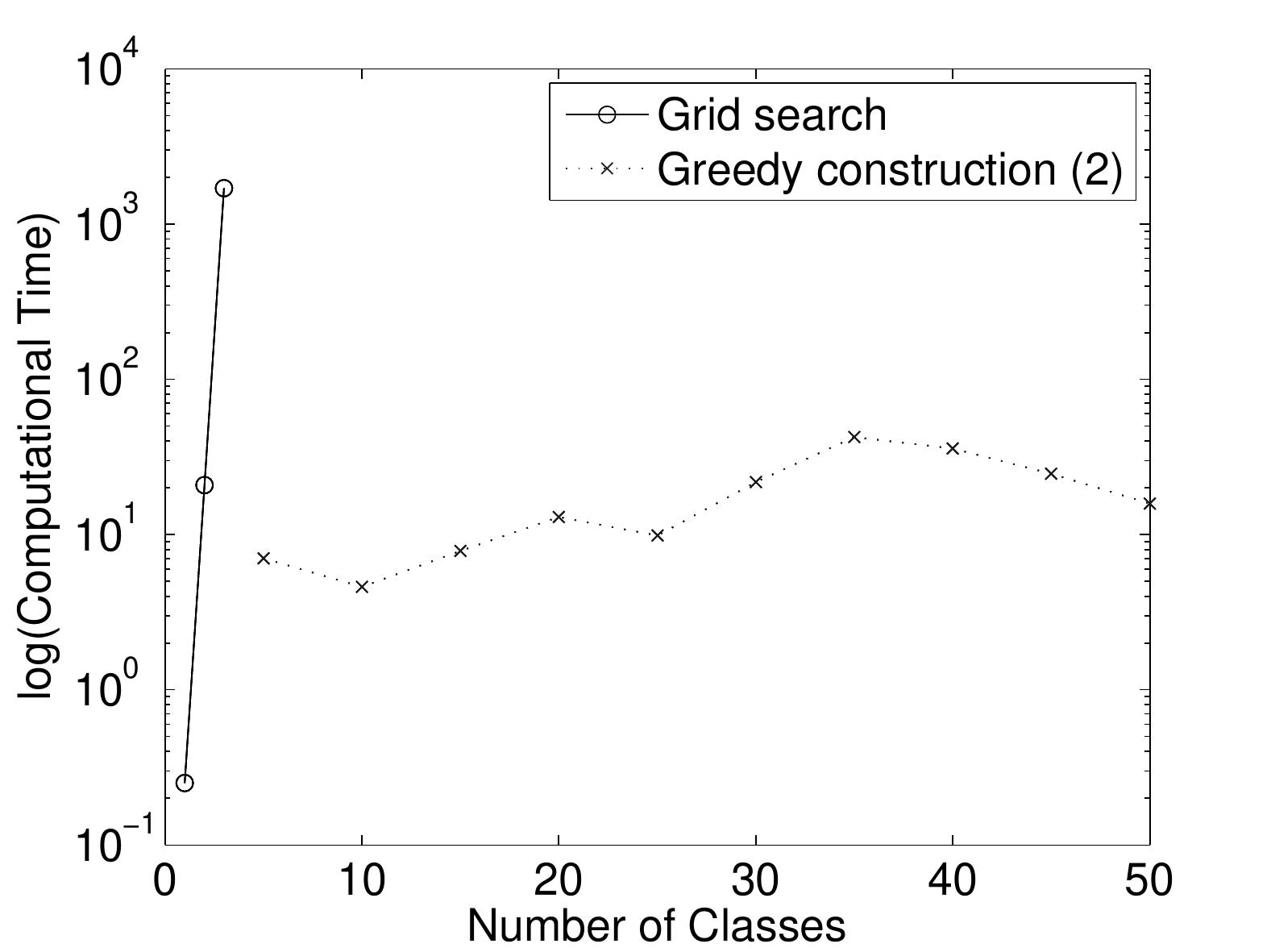}\vspace{-0.2cm}
\caption{Time (in seconds) scalability with the number of classes.}
\label{fig:time_greedy}
\vspace{-0.5cm}
\end{figure}

Finally, Figure~\ref{fig:policies} depicts a qualitative evaluation of the policies returned by our algorithms. We consider a reference value for the budget upper bound $\Psi$ and we show how the thresholds of the optimal policy (obtained with grid search) are distributed over the three different technologies. It can be observed how, by increasing the budget, the optimal policy tends to schedule transmissions with all the three technologies. When the budget gets smaller and smaller, then the policy tries to rely more on those technology that have a longer communication range.

\begin{figure}[!htbp]
\centering
\vspace{-0.2cm}
\includegraphics[scale=0.4]{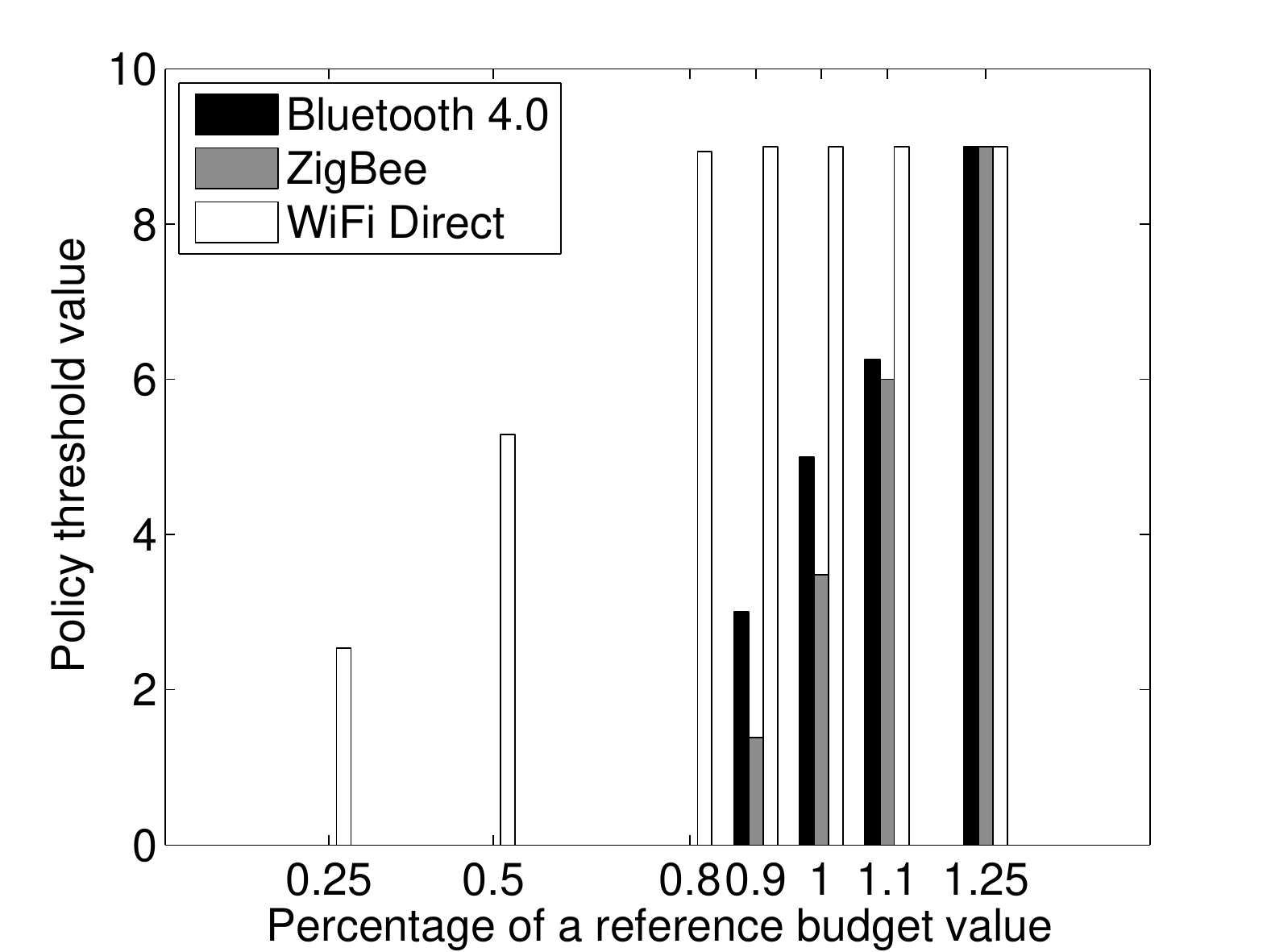}
\vspace{-0.2cm}
\caption{Policy thresholds with different upper bounds on budget.}
\label{fig:policies}
\vspace{-0.5cm}
\end{figure}

\section{Conclusions}

In this paper we studied two-hop routing for Delay Tolerant Networks when heterogeneous technologies are present taking into account beaconing signal and deadlines after which nodes discard packets. Differently from the literature, whose adopts fluid approximation to find optimal policies---providing exact solution in the limit when the number of nodes is infinite, but a coarse approximation otherwise---, we adopt an operations research approach, formulating the problem as an optimization problem and designing approximation schemes with theoretical bounds. We thoroughly evaluated our algorithms with realistic settings in terms of approximation ratio and compute time as the parameters change. We experimentally showed that for all the generated instances our algorithms have an approximation ratio larger than 99\% and that they scale linearly as the values of the parameters increase and therefore they can be applied with extremely large instances. 

\vspace{-0.5cm}


\bibliographystyle{plain}
\bibliography{citations}

\begin{IEEEbiography}{Michael Shell}
Biography text here.
\end{IEEEbiography}

\begin{IEEEbiographynophoto}{John Smith}
Biography text here.
\end{IEEEbiographynophoto}


\begin{IEEEbiographynophoto}{Jane Smith}
Biography text here.
\end{IEEEbiographynophoto}

\end{document}